\documentclass[sigconf,nonacm]{acmart}
\AtBeginDocument{%
  \providecommand\BibTeX{{%
    \normalfont B\kern-0.5em{\scshape i\kern-0.25em b}\kern-0.8em\TeX}}}

\usepackage{amsmath,amsthm,amsfonts}
\usepackage[utf8]{inputenc}
\usepackage[T1]{fontenc}

\usepackage{mathrsfs}  

\usepackage{multirow}

\usepackage{xcolor}
\usepackage[onelanguage,english,ruled,vlined,linesnumbered]{algorithm2e}

\theoremstyle{acmplain}
\newtheorem{theorem}{Theorem}[section]

\newtheorem{proposition}[theorem]{Proposition}

\theoremstyle{acmdefinition}

\newtheorem{assumption}[theorem]{Assumption}
\theoremstyle{remark}
\newtheorem{remark}[theorem]{Remark}



\newcommand{\Fq}{\mathbb{F}_q}

\newcommand{\N}{\mathbb{N}}

\renewcommand{\SS}{\mathcal{S}}

\newcommand{\MM}{\mathcal{M}}

\newcommand{\LL}{\mathcal{L}}


\renewcommand{\v}{\boldsymbol{v}}

\newcommand{\p}{\boldsymbol{p}}

\newcommand{\bvarphi}{\boldsymbol{\varphi}}

\newcommand{\eps}{\boldsymbol{\varepsilon}}
\newcommand{\e}{\boldsymbol{e}}
\newcommand{\y}{\boldsymbol{y}}
\renewcommand{\b}{\boldsymbol{b}}

\newcommand{\bz}{\boldsymbol{0}}


\newcommand{\rank}{\textnormal{rank}}

\newcommand{\ceil}[1]{\left\lceil #1 \right\rceil}
\newcommand{\floor}[1]{\left\lfloor #1 \right\rfloor}
\def\ie {\textit{i.e.\ }}

\def\iff{\textit{iff}}


\copyrightyear{2021}
\acmYear{2021}
\setcopyright{acmcopyright}\acmConference[ISSAC '21]{International Symposium on Symbolic and Algebraic Computation}{July 20--22, 2021}{Saint Petersburg, Russia}
\acmBooktitle{International Symposium on Symbolic and Algebraic Computation (ISSAC '21), July 20--22, 2021, Saint Petersburg, Russia}
\acmPrice{}
\acmDOI{}
\acmISBN{}




\begin{document}

\title{{Polynomial Linear System Solving with Random Errors: new bounds and early termination technique}}

\author{Eleonora Guerrini}
\email{guerrini@lirmm.fr}
\affiliation{%
  \institution{LIRMM, U. Montpellier, CNRS}
  \city{Montpellier}
  \state{France}
  }

\author{Romain Lebreton}
\email{lebreton@lirmm.fr}
\affiliation{%
  \institution{LIRMM, U. Montpellier, CNRS}
	\city{Montpellier}
	\state{France}
}

\author{Ilaria Zappatore}
\email{zappatore@inria.fr}
\affiliation{%
	\institution{Inria, LIX}
	\city{Palaiseau}
	\state{France}
}

\begin{abstract}
  This paper deals with the polynomial linear system solving with errors (PLSwE)
  problem. Specifically, we focus on the evaluation-interpolation technique for
  solving polynomial linear systems and we assume that errors can occur in the
  evaluation step. In this framework, the number of evaluations needed to
  recover the solution of the linear system is crucial since it affects the
  number of computations. It depends on the parameters of the linear system
  (degrees, size) and on a bound on the number of errors.

  Our work is part of a series of papers about PLSwE aiming to reduce this
  number of evaluations. We proved in [Guerrini et al., Proc. ISIT'19] that if
  errors are randomly distributed, the bound of the number of evaluations can be
  lowered for large error rate.

  In this paper, following the approach of [Kaltofen et al., Proc. ISSAC'17], we
  improve the results of [Guerrini et al., Proc. ISIT'19] in two
  directions. First, we propose a new bound of the number of evaluations,
  lowering the dependency on the parameters of the linear system, based on work
  of [Cabay, Proc. SYMSAC'71].  Second, we introduce an early termination
  strategy in order to handle the unnecessary increase of the number of
  evaluations due to overestimation of the parameters of the system and on the
  bound on the number of errors.
\end{abstract}

\maketitle

\section{Introduction}
 
Solving polynomial linear systems (PLS) of the form $A(x)\y(x)=\b(x)$ where A is a nonsingular square matrix and $\b$ is a vector of polynomials over a finite field $\Fq$ is a classical computer algebra problem. The solution $\y(x)$ is a vector of rational functions.
This problem can be efficiently solved by parallelizing the classical evaluation-interpolation technique considering a network of $L$ nodes that independently compute the evaluations $A(\alpha_j)$ and $\boldsymbol{b}(\alpha_j)$ at a given evaluation point $\alpha_j\in \Fq$ and the solution of the evaluated system $\boldsymbol{y}_j=A(\alpha_j)^{-1}\b(\alpha_j)$. 
The nodes then send the so-obtained $\y_1, \ldots, \y_L$ to the master node which finally performs a Cauchy interpolation to recover the solution $\y(x)$. 
As in \cite{boyer_numerical_2014, kaltofen_early_2017}, this paper focuses on a scenario in which the nodes could make errors, possibly computing  $\boldsymbol{y}_j\neq A(\alpha_j)^{-1}\b(\alpha_j)$. After receiving all these evaluations, the master node performs a \textit{Cauchy interpolation with errors} in order to recover the solution $\y(x)$. The problem that the master node has to face, \ie recovering the solution $\y(x)$ of the PLS given its evaluations, some of which erroneous, is what we call Polynomial Linear System Solving with errors (PLSwE).

In order to solve PLSwE, we can exploit decoding techniques of Reed-Solomon (RS)
codes, as shown in \cite{boyer_numerical_2014, kaltofen_early_2017}. Basically
they set out a system of linear equations ({\it key equation}) (as the
Welch-Berlekamp decoding method, \cite{berlekamp_error_1986}) and bound $L$ (\ie
the number of nodes, which coincides with the number of evaluation points
$\alpha_j$ in some way to guarantee the uniqueness of the solution. The goal is
to minimize the number $L$ of evaluation points needed to recover the solution
or equivalently to maximize the bound on the number of errors (\textit{decoding
  radius}) that we could correct. In \cite{boyer_numerical_2014,
  kaltofen_early_2017}, as for classical RS codes they can correct up to the
{\it unique decoding radius}.

In \cite{kaltofen_early_2017} it is shown that there are mainly two ways to
bound $L$: first, considering the problem as a generalized rational function
reconstruction (RFR) and analyze it in terms of some estimations of the solution
degree. Second, exploit the linear algebra structure of the problem taking also
into account the degrees of the input matrix $A$ and of the vector $b$.  More
recently,  \cite{guerrini_polynomial_2019} presented an algorithm that
corrects errors beyond the unique decoding radius (equivalently with less
evaluation points than \cite{boyer_numerical_2014, kaltofen_early_2017})
recovering the solution for almost all errors. The idea is to remark that the
PLSwE problem can be viewed as a generalization of the decoding of the
Interleaved Reed Solomon codes (IRS).

IRS can be seen as the simultaneous evaluation of a vector of polynomials.

Results from decoding IRS codes show that, if errors are uniformly distributed,
the larger the dimension of the vector is, the more errors we can correct,
exceeding the standard unique decoding radius (see
\cite{bleichenbacher_decoding_2003,brown_probabilistic_2004,schmidt_enhancing_2007,
  schmidt_collaborative_2009,schmidt_syndrome_2010}), asymptotically reaching
the optimal error capability of the Shannon bound
(\cite{shannon_communication_1948}).

A first contribution of this work consists in the combination of the
advantages of IRS decoding techniques from \cite{guerrini_polynomial_2019} with
the counting of \cite{kaltofen_early_2017} which exploits the linear algebra
setting as in \cite{cabay_exact_1971} (see Section~\ref{sec:newbound}).

\smallskip Recall that our goal is to lower the number of evaluations in order
to reduce the nodes computations, at the expense of potentially increasing the
complexity of interpolation by the master node.

All the bounds of the number of evaluations introduced for PLSwE solving depend
on some upper bounds on the degree of the solution $\y(x)$ that we want to
recover and on the number of errors.  These upper bounds could overestimate the
actual degrees of $\y(x)$ and number of errors. The discrepancy between these
quantities may significantly overestimate the number of evaluations needed for
the computations compared to the actual number needed to recover the solution.
We propose an \textit{early termination technique} (as in
\cite{kaltofen_early_2017}), an adaptive strategy which, starting from a
\textit{minimal value} of evaluation points, iteratively increments this number
until a nontrivial result is found.

In Section~\ref{sec:fixed-error-bound} we present an early termination technique
for a fixed bound $\tau$ on the number of errors, which is the classical error
correcting codes framework. However, the number of errors could grow with the
number of evaluations $L$, which is gradually incremented in the early
termination technique. For this purpose, we present in
Section~\ref{sec:linearBound} a scenario in which the error bound linearly
depends on $L$.  Compared to the early termination techniques of
\cite{kaltofen_early_2017}, we decrease the number of evaluation points. In
return, our algorithm may fail for a small fraction of errors; we give an
estimation of the success probability of our algorithm in presence of random
errors. To sum up, our second contribution is to propose an early termination
strategy which benefits from the IRS decoding approach, is sensitive to the real
number of errors, and adapts to linear error bound.  To the best of our
knowledge, the dependency on the real number of errors is original in the
literature.

\smallskip
The paper is organized as follows: in Section~\ref{sec:PLSwE} we recall the scenario of PLSwE with results revisited from literature, in Section~\ref{sec:newbound} we present a new bound of the number of evaluation points needed for PLSwE solving in presence of random errors  and finally in Section~\ref{sec:ET} we introduce an early termination algorithm that succeeds for almost all errors.
\section{Polynomial Linear System Solving with Errors}{\label{sec:PLSwE}}

Let $\Fq$ a finite field of order $q$. 
Consider a polynomial linear system (PLS),
\begin{equation}\label{eq:isitPLS}
A(x)\y(x)=\b(x)
\end{equation}
where $A\in \Fq[x]^{n\times n}$ is nonsingular and $\b\in \Fq[x]^{n \times 1}$. This system admits only one solution $\y=\frac{\v}{d}\in \Fq(x)^{n \times 1}$, \ie a vector of rational functions with the same denominator. We assume that $\gcd(\gcd_i(v_i),d)=1$ and that $d$ is monic.

The evaluation-interpolation \cite{mcclellan_exact_1977} is a classic technique for solving PLS. It consists in 
\begin{itemize}
	\item \textit{(evaluation)} the evaluation of $A$ and $\b$ at $L\leq q$ distinct evaluation points $\{\alpha_1, \ldots, \alpha_L\}$.
	In this work, for simplicity we omit the \textit{rank drop case} study, \ie we suppose that for any $\alpha_j$ the corresponding evaluated matrix $A(\alpha_j)$ is still full rank. All the results of this work can be extended to the general case of rank drops (as in \cite{kaltofen_early_2017} for more details).  
	\item \textit{(Pointwise interpolation of the evaluated systems)} Compute $\y_j=A(\alpha_j)^{-1}\b(\alpha_j)=\frac{\v(\alpha_j)}{d(\alpha_j)}$, for any $j$.
	\item {\textit{(Interpolation)}} Reconstruct $(\v,d)\in \Fq[x]^{(n+1) \times 1}$, given the evaluated solutions $\y_j$ for any $j$ and the degree bounds $N>\deg(\v):=\max_{1\leq i \leq n}\{\deg(v_i)\}$ and $D>\deg(d)$.
\end{itemize}
 In order to minimize the  number of evaluation points needed to \textit{uniquely recover} the solution one can consider  
\begin{equation}{\label{eq:numbEvPointsPLS}}
\LL(N,D):=\min\{\underbrace{N+D-1}_{\LL_{RFR}}, \underbrace{\max\{\deg(A)+N, \deg(\b)+D\}}_{\LL_{PLS}}\},
\end{equation}
where $\deg(A):=\max_{1\leq i,j\leq n}\deg(a_{i,j}(x))$.
We use this notation to stress out the dependency of the degree bounds $N$ and $D$.
Recall that $\LL_{RFR}$ is the minimum number of evaluation points needed to uniquely interpolate a \textit{rational function}  (\ie the \textit{Cauchy interpolation} problem) \cite[Section 5.7]{vzgathen_gerhard_2013}.
On the other hand, $\LL_{PLS}$ is the minimum number of evaluation points needed to uniquely recover a \textit{rational function which is a solution of a PLS} (\cite{cabay_exact_1971}). Indeed, in this case we also assume to know the degrees of $A,\b$ or  their upper bound.

We remark that if the bounds are tight, \ie $\deg(\v)+1=N$ and $\deg(d)+1=D$, then $L_{PLS}<L_{RFR}$ if and only if $\deg(d)>\deg(A)$ (see \cite[Theorem~3.1]{kaltofen_early_2017}).
In the following section we formalize and describe our error scenario.

\subsection{Parallelization and error model}\label{subsection:ErrorModel}
Fix $L$ pairwise distinct evaluation points $\{\alpha_1,\ldots, \alpha_{L}\}$.
Assume that any node, computes $\y_j=A(\alpha_j)^{-1}\b(\alpha_j)\in \Fq^{n \times 1}$ for any $1\leq j \leq L$. 

These nodes could make some errors and compute $\y_j\neq A(\alpha_j)^{-1}\b(\alpha_j)=\frac{\v(\alpha_j)}{d(\alpha_j)}$. Notice that, in this error model, the number of errors coincides with the number of nodes which compute an incorrect result.
We assume that after the nodes computations the master node receives the following matrix
\begin{equation}{\label{eq:nodesErrResult}}
Y=\left(\frac{\v(\alpha_1)}{d(\alpha_1)}, \ldots, \frac{\v(\alpha_{L})}{d(\alpha_{L})}\right)+\Xi
\end{equation}
where $\Xi\in \Fq^{n \times L}$ is the error matrix. We denote by $\Xi_{*,j}$ the $j$th column of $\Xi$. The \textit{error support} of the error matrix is $E:=\{j \mid \Xi_{*,j}\neq \boldsymbol{0}\}$.
After receiving $Y$, the master node has to recover the solution $(\v,d)$ of the PLS \eqref{eq:isitPLS}. In this work, we focus on this step, which we call \textit{polynomial linear system solving with errors} (PLSwE). 

Formally PLSwE refers to the problem of recovering $(\v,d)$ the solution of \eqref{eq:isitPLS} given,
\begin{itemize}

	\item $1\leq L \leq q$ distinct evaluation points in $\Fq$, $\{\alpha_1, \ldots, \alpha_L\}$,
	\item the degree bounds $1\leq N, D\leq L$, such that $N>\deg(\v)$, $D>\deg(d)$ and $\deg(A)$, $\deg(\b)$,
	\item an upper bound on the number of errors occurred at the parallelization step, \ie $\tau\geq |E|$, where $E=\{j \mid \Xi_{*, j}\neq \boldsymbol{0}\}$,
	\item the matrix $Y$ as in \eqref{eq:nodesErrResult}.

\end{itemize}

\subsection{Resolution method for PLSwE and previous results}
As in \cite{boyer_numerical_2014, kaltofen_early_2017, guerrini_polynomial_2019}, in order to solve PLSwE,  we search for solutions $(\bvarphi, \psi)=(\varphi_1, \ldots, \varphi_n, \psi)\in \Fq[x]^{(n+1)\times 1}$ of the following \textit{key equations}
\begin{equation}{\label{eq:keyEqPLSWE}}
\varphi_i(\alpha_j)=y_{i,j}\psi(\alpha_j), \hspace{0.1cm} \deg(\varphi_i)<N+\tau,\hspace{0.1cm} \deg(\psi)<D+\tau.
\end{equation}
for any $1\leq i \leq n$ and $1\leq j \leq L$.

Note that the key equations \eqref{eq:keyEqPLSWE} are the vector generalization of the classic computer algebra problem of the \textit{Cauchy interpolation} (\cite[Section 5.7]{vzgathen_gerhard_2013}).

This approach is the generalization of the  Welch-Berlekamp decoding method \cite{berlekamp_error_1986} for Reed-Solomon codes.
\smallskip
In this framework, it is crucial to determine the smallest number of evaluation points $L$ needed to guarantee the \textit{uniqueness} of a solution of these key equations, in the sense that we explain in what follows.

We denote
$$\SS_{Y,N+\tau,D+\tau}:=\{(\bvarphi, \psi)\in \Fq[x]^{(n+1)\times 1}  \textit{ satisfying \eqref{eq:keyEqPLSWE}}\}.$$

We now consider the $\Fq[x]$-module $\MM$ spanned by the solutions of the key
equations \eqref{eq:keyEqPLSWE}.
More specifically, any element of $\MM$ is a linear combination with polynomial coefficients of $\rank(\MM)$ solutions $(\bvarphi, \psi_i)\in \SS_{Y,N+\tau,D+\tau}$ (\ie a basis of $\MM$). 
The case in which $\MM$ is uniquely generated, \ie $\rank(\MM)=1$, corresponds to what we refer to \textit{uniqueness} of the solution of the key equations.
Indeed, in this case $\MM$ is generated by only one element $(\bvarphi, \psi)\in \SS_{Y,N+\tau,D+\tau}$, \ie for any $(\bvarphi', \psi')\in \MM$, there exists a polynomial $R\in \Fq[x]$ such that $(\bvarphi', \psi')=(R\bvarphi, R\psi)$.
In other terms, if the polynomials $\psi, \psi'$ are nonzero, the two vectors of rational functions $\bvarphi/\psi$ and $\bvarphi'/\psi'$ are equal. 

\begin{remark}\label{remark:loc}
	Let $\Lambda:=\prod_{j \in E}(x-\alpha_j)$ be the \textit{error locator polynomial}, \ie the monic polynomial of degree $\deg(\Lambda)=|E|$ whose roots are the \textit{erroneous evaluations}.
	
	We have that $(\Lambda \v, \Lambda d)\in \SS_{Y,N+\tau,D+\tau}$. Indeed, for any $1\leq i \leq n$ and $1\leq j\leq L$,
	$
	\Lambda(\alpha_j) v_i(\alpha_j)=y_{i,j}\Lambda(\alpha_j)d(\alpha_j)
	$
	and we also have that $\deg(\Lambda\v)<N+\tau$ and $\deg(\Lambda d)<D+\tau$.
\end{remark}

In \cite{boyer_numerical_2014, kaltofen_early_2017} is provided the minimum number of points which guarantees the uniqueness of the solutions of the key equations \eqref{eq:keyEqPLSWE}. 
The following proposition is a restatement of this result using  definitions and notations of this paper.
  We will later prove this result in a more \textit{general} context (see Proposition~\ref{prop:stopcritkalto}.
\begin{proposition}\label{prop:kaltoNbpointsUniqueness}
	If $
	L\geq L_{KPSW}:= \LL(N+\tau,D+\tau)+\tau$ (see \eqref{eq:numbEvPointsPLS}),
	then the rank of $\MM$, the $\Fq[x]$-module generated by solutions, is $1$. By Remark \ref{remark:loc}, if $(\bvarphi, \psi)$ is a generator of $\MM$ with $\psi$ monic, then $(\bvarphi, \psi)=(\Lambda \v, \Lambda d)$.
\end{proposition}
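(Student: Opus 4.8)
The plan is to show that \emph{every} element of $\SS_{Y,N+\tau,D+\tau}$ is an $\Fq[x]$-multiple of the pair $(\Lambda\v,\Lambda d)$. Since that pair already lies in $\SS_{Y,N+\tau,D+\tau}$ by Remark~\ref{remark:loc}, this immediately gives $\MM=\Fq[x]\cdot(\Lambda\v,\Lambda d)$, hence $\rank(\MM)=1$, and the monic generator is exactly $(\Lambda\v,\Lambda d)$ (note $\Lambda d$ is monic, being a product of monic polynomials). Fix an arbitrary $(\bvarphi,\psi)\in\SS_{Y,N+\tau,D+\tau}$. The degenerate case $\psi=0$ is immediate: the key equations force each $\varphi_i$ to vanish at all $L$ points while $\deg(\varphi_i)<N+\tau\le L$, so $\bvarphi=\boldsymbol{0}$. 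Assume henceforth $\psi\neq 0$. The argument then splits according to which term realises the minimum in $\LL(N+\tau,D+\tau)$; in either case the hypothesis $L\ge \LL(N+\tau,D+\tau)+\tau$ supplies exactly the number of points the corresponding argument consumes.

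If the minimum is $\LL_{RFR}$, I would compare $(\bvarphi,\psi)$ directly with $(\Lambda\v,\Lambda d)$. Both satisfy the key equations, so at every evaluation point $\alpha_j$ one has $\varphi_i(\alpha_j)=y_{i,j}\psi(\alpha_j)$ and $(\Lambda v_i)(\alpha_j)=y_{i,j}(\Lambda d)(\alpha_j)$; cross-multiplying and subtracting shows each polynomial $\varphi_i\,\Lambda d-\Lambda v_i\,\psi$ vanishes at all $L$ points. Its degree is at most $(N+\tau-1)+(D+\tau-1)=\LL_{RFR}(N+\tau,D+\tau)-1$, and since in this regime $L\ge \LL_{RFR}(N+\tau,D+\tau)$, it is the zero polynomial. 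Cancelling the nonzero factor $\Lambda$ yields $\varphi_i d=v_i\psi$ for all $i$.

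If instead the minimum is $\LL_{PLS}$, I would exploit the structural identity $A\v=d\b$ coming from the PLS itself (Cabay's observation). Set $\brho:=A\bvarphi-\psi\b$, whose degree is at most $\max\{\deg(A)+N+\tau-1,\deg(\b)+D+\tau-1\}=\LL_{PLS}(N+\tau,D+\tau)-1$. For every error-free index $j\notin E$ the cross-multiplied key equations give $d(\alpha_j)\bvarphi(\alpha_j)=\psi(\alpha_j)\v(\alpha_j)$, and combining this with $A(\alpha_j)\v(\alpha_j)=d(\alpha_j)\b(\alpha_j)$ gives $d(\alpha_j)\brho(\alpha_j)=\boldsymbol{0}$; under the standing no-rank-drop assumption (so that $d(\alpha_j)\neq 0$ at the error-free points, cf. \cite{kaltofen_early_2017}) this forces $\brho(\alpha_j)=\boldsymbol{0}$. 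There are at least $L-\tau\ge\LL_{PLS}(N+\tau,D+\tau)$ such indices, hence $\brho=\boldsymbol{0}$, i.e. $A\bvarphi=\psi\b$. Together with $A\v=d\b$ and the nonsingularity of $A$ this gives $A(d\bvarphi)=A(\psi\v)$, whence $d\bvarphi=\psi\v$, that is again $\varphi_i d=v_i\psi$ for all $i$.

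In both regimes I reach $d\varphi_i=\psi v_i$ for every $i$. Since $\gcd(\gcd_i(v_i),d)=1$, the relations $d\mid \psi v_i$ force $d\mid\psi$; writing $\psi=Rd$ and cancelling $d$ gives $\bvarphi=R\v$. It remains to see that $\Lambda\mid R$: at an error index $j\in E$, if $R(\alpha_j)\neq 0$ then the key equations would read $v_i(\alpha_j)=y_{i,j}d(\alpha_j)$ for all $i$, i.e. column $j$ of $Y$ would be error-free, contradicting $j\in E$; hence $R$ vanishes at every $\alpha_j$, $j\in E$, and these points being distinct, $\Lambda\mid R$. Thus $(\bvarphi,\psi)=S\,(\Lambda\v,\Lambda d)$ with $R=\Lambda S$, which closes the argument. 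I expect the $\LL_{PLS}$ regime to be the main obstacle: making $\brho$ vanish at the error-free points is what replaces the $\LL_{RFR}$ bookkeeping by the sharper $\LL_{PLS}$ one, and it is precisely there that the nondegeneracy of the evaluated systems is needed. Note finally that the uniform extra $\tau$ in $L_{KPSW}$ is genuine slack in the $\LL_{RFR}$ regime (where $\LL_{RFR}$ points already suffice) but is exactly the number of discarded error columns in the $\LL_{PLS}$ regime.
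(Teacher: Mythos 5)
Your proposal is correct and follows essentially the same route as the paper's own proof (which the paper gives in the more general setting of Proposition~\ref{prop:stopcritkalto} with parameters $\nu,\vartheta$ and then specializes): the same two-regime split on which term realises $\LL(N+\tau,D+\tau)$, the same Welch--Berlekamp cross-multiplication against $(\Lambda\v,\Lambda d)$ in the $\LL_{RFR}$ case, the same residual $A\bvarphi-\psi\b$ argument \`a la Cabay in the $\LL_{PLS}$ case, and the same coprimality-plus-error-locator factorization to conclude. Your added observations (the separate $\psi=0$ case, and the remark that the extra $\tau$ is slack in the $\LL_{RFR}$ regime) are correct but not essential deviations.
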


This proposition tells us that if $L\geq \LL(N+\tau,D+\tau)+\tau=\min\{N+D-1, \max\{\deg(A)+N, \deg(\b)+D\}\}+2\tau$, then $\rank(\MM)=1$ and the solution space is spanned by vectors of the form $(x^i\Lambda \v,x^i\Lambda d)$, \ie
$
\SS_{Y,N+\tau,D+\tau}=\langle x^i\Lambda \v,x^i\Lambda d\rangle_{0\leq i <\delta_{N+\tau,D+\tau}}
$
where
\begin{equation}{\label{eq:deltaNDtau}}
  \delta_{N+\tau,D+\tau}:=\min\{N+\tau-(\deg(\v)+|E|), D+\tau-(\deg(d)+|E|)\}
\end{equation}
By convention, if $\delta_{N+\tau,D+\tau} \leq 0$ we set
$\langle x^i\Lambda \v, x^i\Lambda d\rangle_{0\leq i < \delta}
=\{(\boldsymbol{0},0)\}$.

Indeed, notice that
\begin{align*}
  \deg(x^i\Lambda \v)=i+|E|+\deg(\v)\leq N-1+\tau\\
  \deg(x^i\Lambda d)=i+|E|+\deg(d)\leq D-1+\tau
\end{align*}
and so
$i\leq \min\{N-1+\tau-(\deg(\v)+|E|), D-1+\tau-(\deg(d)+|E|)\}=\delta_{N+\tau,D+\tau}-1.$

\begin{remark}{\label{remark:coeffMatrIRS}}

  Let $\mathcal{V}_{m,p}=(\alpha_i^{j-1})_{\substack{1\leq i \leq m\\1\leq j \leq p}}$.
  Consider the homogeneous linear system related to \eqref{eq:keyEqPLSWE}. We observe that the set of solutions $\SS_{Y,N+\tau,D+\tau}$ is the kernel of the matrix
  \begin{equation}{\label{eq:coeffMatrixPLSwE}}
    M_{Y,N+\tau,D+\tau}=\left(
      \begin{array}{l l l | c}
        \mathcal{V}_{L,N+\tau} &        &	               & -D_1\mathcal{V}_{L, D+\tau}\\
                               & \ddots	&                      & \vdots\\
                               &	& \mathcal{V}_{L,N+\tau} & -D_n\mathcal{V}_{L, D+\tau}\\
      \end{array}\right)
  \end{equation}
  where for any $1\leq i \leq n$, $D_i$ is the diagonal matrix whose elements
  on the diagonal are $y_{i,1},\ldots,y_{i,L}$.
  
  In \cite{boyer_numerical_2014, kaltofen_early_2017} was proposed an
  algorithm which computes a column echelon form of $M_{Y,N+\tau,D+\tau}$ in
  order to find the minimal degree solution of $\SS_{Y,N+\tau,D+\tau}$, \ie
  $(\Lambda \v, \Lambda d)$.  Another approach to find this solution could be to
  compute a basis (as for instance in \cite{olesh_vector_2007,
    rosenkilde-algorithms-sim_2016}) of the $\Fq[x]$-module $\MM$. Indeed, we
  have seen in Proposition~\ref{prop:kaltoNbpointsUniqueness} that if
  $L\geq L_{KPSW}$, the module is generated by $(\Lambda \v, \Lambda d)$.

  Note that we can recover $(\v,d)$ from $(\Lambda \v, \Lambda d)$ by dividing
  by $\Lambda=\gcd(\Lambda \v, \Lambda d)$. We denote by
  \texttt{FindSolution}$(Y,N+\tau,D+\tau)$ the algorithm that computes $(\v, d)$
  from $\SS_{Y,N+\tau,D+\tau}$ using one of the above methods, followed by the
  division by the $gcd$.

\end{remark} 

	We now observe that if $d\in \Fq$ (\ie$D=1$), PLSwE is the problem of recovering a \textit{vector} of $n$ \textit{polynomials}, given its evaluations, some of which erroneous. This problem can be viewed as the problem of decoding $n$ codewords of a Reed-Solomon code of length $L$ and dimension $N$. 
	In this specific case, Proposition~\ref{prop:kaltoNbpointsUniqueness} tells us that with
	$
	L\geq N+2\tau-1 
	$
	or equivalently if $|E|\leq \tau\leq \frac{L-N}{2}=:\tau_0$ we can uniquely decode the $n-$ vector of RS codewords. From a coding theory point of view, $\tau_0$ the \textit{unique decoding radius} of an RS code of length $L$ and dimension $N$.
	We now recall that an $n$-Interleaved RS code of length $L$ and dimension $N$ is the direct sum of $n$-RS codes with the same length and dimension. An $n$-IRS codeword is the evaluation of a vector of polynomials at $L$ distinct evaluation points. Therefore, PLSwE with constant $d$ can also be seen as the decoding of an $n$-IRS codeword. The advantage of considering the problem under this interleaving point of view consists in the fact that we can extend the results of the decoding of IRS (\cite{bleichenbacher_decoding_2003, brown_probabilistic_2004, schmidt_enhancing_2007, schmidt_collaborative_2009, schmidt_syndrome_2010} and correct beyond the unique decoding radius (or equivalently reduce the number of evaluations needed to recover the solution of our PLS).

Indeed, in \cite{guerrini_polynomial_2019} we proved that in the general case of PLSwE, we can reconstruct the solution $(\v,d)$ of the PLS with
$
L_{GLZ19}:=N+\deg(d)-1+|E|+\left \lceil\frac{|E|}{n}\right\rceil
$
evaluation points for almost all errors.
In this result we assume to know exactly the actual degree of the denominator $d$ and the actual number of errors $|E|=|\{j \mid \Xi_{j,*}\neq \boldsymbol{0}\}|$. Notice that this assumption is quite strong. For this reason in this work we introduce a new bound on $L$ which generalizes $L_{GLZ19}$  by assuming to know some upper bounds on the degree of $d$ and of $|E|$. Our new bound also takes into account the linear algebra setting of the problem (see Equation~(\ref{eq:numbEvPointsPLS})).

\section{New bound for PLSwE}\label{sec:newbound}
There are two new contributions in this section. First, we relax the constraint of  \cite{guerrini_polynomial_2019} introducing a number of evaluations which only depends on some upper bounds on the degree of the denominator and on the number of errors. We also introduce another independent counting on the number of evaluations that takes into account $\deg(A), \deg(\b)$ of the PLS \eqref{eq:isitPLS} as in \cite{cabay_exact_1971,kaltofen_early_2017}. 
We prove that with
\begin{equation}{\label{eq:LGLZ}}
L\geq L_{GLZ}:=\LL(N+\tau,D+\tau)+\left \lceil\frac{\tau}{n}\right \rceil  
\end{equation}
evaluation points, where $\LL(N+\tau,D+\tau)$ is defined as in \eqref{eq:numbEvPointsPLS},
we can uniquely reconstruct the solution for almost all errors (Theorem~\ref{thm:ISIT2}).

\begin{theorem}{\label{thm:ISIT2}}
  Let $\tau\geq 0$ and $n, N, D\geq 1$. Let $ L\geq L_{GLZ} $, consider the set
  of evaluation points $\{\alpha_1, \ldots, \alpha_{L}\}$ and
  $E\subseteq\{1, \ldots, L\}$, with $|E|\leq \tau$.  Moreover, fix
  $A(x)\y(x)=\b(x)$ and denote $\y(x) = \frac{\v(x)}{d(x)}$ with
  $\gcd(\gcd_i(v_i),d)=1$ and $d$ monic. Let $\deg(\v)<N$ and
  $\deg(d)<D$.
	
	Consider the random matrix $Y$, where we denote by $\y_j:=Y_{*,j}$ for any $1\leq j \leq L$, constructed as follows:
	\begin{itemize}
		\item if $j\in E$, $\y_j$ is a uniformly distributed element of $\Fq^{n\times 1}$,
		\item if $j \notin E$, $\y_j=\frac{\v(\alpha_j)}{d(\alpha_j)}$,
	\end{itemize}
	then
  
	$\SS_{Y,N+\tau,D+\tau}=\langle x^i\Lambda \v, x^i\Lambda d\rangle_{0\leq i < \delta_{N+\tau,D+\tau}},$
	where $\delta_{N
	+\tau,D+\tau}$ defined as in \eqref{eq:deltaNDtau},
	with probability at least $1-\frac{D+\tau}{q}$.
\end{theorem}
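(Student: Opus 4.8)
The plan is to prove the two inclusions separately. The inclusion $\langle x^i\Lambda\v, x^i\Lambda d\rangle_{0\le i<\delta_{N+\tau,D+\tau}}\subseteq\SS_{Y,N+\tau,D+\tau}$ is already established by Remark~\ref{remark:loc} together with the degree computation following it, and its listed generators have pairwise distinct degrees, hence are $\Fq$-linearly independent. So $\dim_{\Fq}\SS_{Y,N+\tau,D+\tau}\ge\delta_+:=\max\{\delta_{N+\tau,D+\tau},0\}$ holds for every realisation of $Y$. Everything therefore reduces to the reverse bound $\dim_{\Fq}\SS_{Y,N+\tau,D+\tau}\le\delta_+$, which I will show fails with probability at most $(D+\tau)/q$.

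First I would linearise the problem in the denominator alone. Since $L\ge L_{GLZ}\ge N+\tau$, a polynomial of degree $<N+\tau$ is determined by its values at the $\alpha_j$; thus for a fixed $\psi$ there is at most one admissible $\bvarphi$, and it exists if and only if, for every $i$, the vector $(y_{i,j}\psi(\alpha_j))_{1\le j\le L}$ is the evaluation of a polynomial of degree $<N+\tau$. Hence $(\bvarphi,\psi)\mapsto\psi$ identifies $\SS_{Y,N+\tau,D+\tau}$ with the $\Fq$-vector space $V$ of those $\psi$, $\deg\psi<D+\tau$, satisfying these $n$ Reed--Solomon membership conditions. Expressing each condition through a parity-check matrix of $\mathrm{RS}[L,N+\tau]$ (a denominator reduction of the matrix $M_{Y,N+\tau,D+\tau}$ of Remark~\ref{remark:coeffMatrIRS}), $V=\ker\widetilde M(Y)$ where $\widetilde M(Y)$ has exactly $D+\tau$ columns, one per coefficient of $\psi$, and every entry is affine in the data $y_{i,j}$; in particular of degree at most $1$ in the error entries $\{y_{i,j}:j\in E\}$, which in the model are independent and uniform over $\Fq$.

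The probabilistic heart is then a rank statement about $\widetilde M(Y)$. By the first paragraph $\dim\ker\widetilde M(Y)\ge\delta_+$, i.e. $\rank\widetilde M(Y)\le s:=(D+\tau)-\delta_+$ for every realisation, and the desired bound $\dim\SS_{Y,N+\tau,D+\tau}=\delta_+$ is precisely $\rank\widetilde M(Y)=s$. The failure event $\{\rank\widetilde M(Y)<s\}$ forces every $s\times s$ minor of $\widetilde M(Y)$ to vanish. Each such minor is, as an $s\times s$ determinant of entries of degree $\le 1$, a polynomial in $\{y_{i,j}:j\in E\}$ of total degree at most $s\le D+\tau$. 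Consequently, as soon as one fixed $s\times s$ minor is not identically zero, the Schwartz--Zippel lemma gives $\Pr[\text{that minor}=0]\le(D+\tau)/q$, whence $\Pr[\text{failure}]\le(D+\tau)/q$, which is exactly the claimed bound.

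What remains, and what I expect to be the real obstacle, is to exhibit a single $s\times s$ minor of $\widetilde M$ that is not the zero polynomial, equivalently to produce one assignment of error values (supported on $E$, with $|E|\le\tau$) for which $\dim V=\delta_+$ at $L=L_{GLZ}$. This is where the counting $\LL(N+\tau,D+\tau)$ and the interleaving enter, and where the improvement from $\tau$ to $\lceil\tau/n\rceil$ lives. For a putative extra solution I would set $p_i:=d\varphi_i-v_i\psi$ (for the $\LL_{RFR}$ count) or, using $A\v=\b d$, the entries of $A\bvarphi-\b\psi$ (for the Cabay/$\LL_{PLS}$ count): the correct points make $\Gamma:=\prod_{j\notin E}(x-\alpha_j)$ divide each, and the two degree counts bound the quotient degrees by the collaborative slack $\approx\tau-\lceil\tau/n\rceil$. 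For generic error values the $n$ coupled interpolation conditions these quotients must meet at the $|E|$ error points become overdetermined, forcing the quotients to vanish, i.e. $p_i=0$ for all $i$; coprimality $\gcd(\gcd_i v_i,d)=1$ then yields $d\mid\psi$, and evaluating the key equations at each error point, where some $y_{i,j}d(\alpha_j)-v_i(\alpha_j)\ne0$ since $\y_j\ne\v(\alpha_j)/d(\alpha_j)$, forces $\Lambda\mid(\psi/d)$, placing $\psi$ in the $\delta_+$-dimensional span. This exhibits the required non-degenerate assignment and simultaneously encodes the two-case ($\LL_{RFR}$ versus $\LL_{PLS}$) structure of $L_{GLZ}$.
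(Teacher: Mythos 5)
Your overall architecture is the same as the paper's: establish the easy inclusion, observe that failure of the reverse inclusion forces a specific minor of the key-equation matrix to vanish, bound that minor's degree in the error variables by $D+\tau$, and apply Schwartz--Zippel --- all of which hinges on exhibiting \emph{one} error assignment supported on $E$ for which the solution space collapses to $\langle x^i\Lambda\v, x^i\Lambda d\rangle$. Your repackaging of the linear algebra (eliminating $\bvarphi$ and working with a matrix $\widetilde M(Y)$ having only $D+\tau$ columns) is a legitimate variant of the paper's observation that only the last $D+\tau$ columns of $M_{Y,N+\tau,D+\tau}$ carry the variables $y_{i,j}$; both yield the degree bound $D+\tau$ and the probability $(D+\tau)/q$.

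The genuine gap is exactly at the point you flag as ``the real obstacle'': you do not actually produce the witness assignment. Your argument for it --- that ``for generic error values the $n$ coupled interpolation conditions \dots become overdetermined, forcing the quotients to vanish'' --- is circular: an overdetermined homogeneous system need not have only the trivial solution, and establishing that it does for \emph{generic} errors is precisely the nonvanishing-of-$C$ statement that Schwartz--Zippel is supposed to convert into a probability bound. You cannot invoke genericity to prove the polynomial is nonzero when the nonzero polynomial is what licenses the genericity argument. The paper closes this with an explicit, deterministic construction that your sketch is missing: partition $E=\cup_{i=1}^n I_i$ with $|I_i|\le\lceil|E|/n\rceil$, and for $j\in I_i$ choose the corrupted column $V_{*,j}$ so that $\v(\alpha_j)-d(\alpha_j)V_{*,j}=\eps_i$ (respectively $-A(\alpha_j)^{-1}d(\alpha_j)\eps_i$ for the $\LL_{PLS}$ branch). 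Then for any solution $(\bvarphi,\psi)$ the polynomial $p_i=\psi v_i-d\varphi_i$ (resp.\ the $i$th entry of $A\bvarphi-\b\psi$) vanishes at \emph{every} $\alpha_j$ with $j\notin I_i$, i.e.\ has at least $L-\lceil|E|/n\rceil$ roots, which exceeds its degree bound precisely because $L\ge L_{GLZ}$ includes the $\lceil\tau/n\rceil$ term; hence $p_i=0$, and the coprimality and error-locator arguments you correctly describe finish the job. Without this (or an equivalent explicit witness), your proof does not go through.
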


\begin{proof}
  
  First notice that $(\Lambda \v, \Lambda d)\in \SS_{Y,N+\tau,D+\tau}$ and so \linebreak
    $\langle x^i\Lambda \v, x^i\Lambda d\rangle_{0\leq i < \delta_{N+\tau,D+\tau}} \subseteq \ker(M_{Y,N+\tau,D+\tau})
    =\SS_{Y,N+\tau,D+\tau}$
  where the last equality was already observed in the Remark~\ref{remark:coeffMatrIRS}.
  The proof is based on the following two steps:
  
  \begin{enumerate}
  \item\label{item1proof} we prove that there exists a draw of columns of $Y$,  $\y_j$ for $j \in E$, for which the corresponding solution space $\SS_{Y,N+\tau,D+\tau}=\langle x^i\Lambda \v, x^i \Lambda d\rangle_{0\leq i < \delta_{N+\tau,D+\tau}}$. Notice that we only need to prove this inclusion $\subset$ since the other one is always true.
  \item\label{item2proof} In the second part, we derive the bound on fraction of errors for which the solution space is not of the form $\langle x^i\Lambda \v, x^i \Lambda d\rangle$.
  \end{enumerate}
  
  \ref{item1proof}) First assume that $L_{GLZ}=N+D-1+\tau+\left\lceil \frac{\tau}{n}\right \rceil$. Consider a partition of the set of error positions $E$, \ie $E=\cup_{i=1}^nI_i$, such that for any $1\leq i \leq n$, $|I_i|\leq \lceil |E|/n\rceil$.
  Note that such a partition exists since $n\lceil |E|/n\rceil \geq |E|$. For any $j \in E$, denote by $i_j$ the unique index such that $j\in I_{i_j}$.
  
  Construct a matrix $V$, such that $V_{*,j}=\frac{\v(\alpha_j)}{d(\alpha_j)}$
  if $j \not\in E$, and for $j \in E$ consider $V_{*,j} \in \Fq^{n \times 1}$ chosen so that
  $
  \v(\alpha_j)-d(\alpha_j)V_{*,j}=\eps_{i_j}
  $
 (where $\eps_{i}$ is the $i$th element of the canonical basis of $\Fq^{n \times 1}$).  

  Consider $(\bvarphi, \psi)\in \SS_{V,N+\tau,D+\tau}$ and multiply $
  \v(\alpha_j)-d(\alpha_j)V_{*,j}=\eps_{i_j}
  $by  $\psi(\alpha_j)$, for $j \in E$. Since $(\bvarphi, \psi)\in \SS_{V,N+\tau,D+\tau}$, by the key equations
  \eqref{eq:keyEqPLSWE}, we get
  $\psi(\alpha_j)\eps_{i_j}=\psi(\alpha_j)\v(\alpha_j)-d(\alpha_j)\psi(\alpha_j)V_{*,j}=(\psi\v-\bvarphi)(\alpha_j).$

  Fix $1\leq i \leq n$, we claim that for any $j \notin I_i$ then
  $\psi(\alpha_j)v_i(\alpha_j)-d(\alpha_j)\varphi_i(\alpha_j)=0$.  Indeed, if
  $j \notin E$, then $V_{*,j}=\frac{\v(\alpha_j)}{d(\alpha_j)}$ and so by
  replacing $V_{*,j}$ in the key equations \eqref{eq:keyEqPLSWE}, we have $\psi(\alpha_j)v_i(\alpha_j)-d(\alpha_j)\varphi_i(\alpha_j)=0$. Now
  if $j \in E\setminus I_i$, by the choice of $V_{*,j}$, then
  $\psi(\alpha_j)v_i(\alpha_j)-d(\alpha_j)\varphi_i(\alpha_j)=0$.

 Note that for any $1\leq i \leq n$, $\deg(\psi v_i-d\varphi_i)<N+D+\tau-1$.
  On the other hand the number of roots of these polynomials are $L-|I_i|\geq L-\lceil |E|/n\rceil \geq L_{GLZ} -\tau/n$ and since $L_{GLZ}\geq  N+D-1+\tau+\tau/n$ it is then $L-|I_i|\geq N+D+\tau-1$. Therefore, since all these polynomials have more roots than their degree they are the zero polynomials.
  Hence, $\psi(x)\v(x)-d(x)\bvarphi(x)=\boldsymbol{0}$.
  
  \smallskip
  On the other hand, assume that $L_{GLZ}=\max\{\deg(A)+N,\deg(\b)+D\}+\left\lceil\frac{\tau}{n}\right\rceil+\tau$.
  As before, we can consider a partition of $E$, $E=\cup_{i=1}^n I_i$, such that for any $1\leq i \leq n$, $|I_i|\leq \lceil |E|/n\rceil$.
  
  Construct a matrix $V$, such that $V_{*,j}=\frac{\v(\alpha_j)}{d(\alpha_j)}$
  if $j \notin E$, and so that $V_{*,j}$ satisfies
  $\v(\alpha_j)-d(\alpha_j)V_{*,j}=-A(\alpha_j)^{-1}d(\alpha_j)\eps_{i_j}$ when
  $j \in E$.
 
 For $j \in E$,
  $\v(\alpha_j)-d(\alpha_j)V_{*,j}=-A(\alpha_j)^{-1}d(\alpha_j)\eps_{i_j}$ or equivalently 
  $A(\alpha_j)d(\alpha_j)V_{*,j}-A(\alpha_j)\v(\alpha_j)=d(\alpha_j)\eps_{i_j}$. Hence,
  $\eps_{i_j}= A(\alpha_j)V_{*,j}-A(\alpha_j)\frac{\v(\alpha_j)}{d(\alpha_j)}=A(\alpha_j)V_{*,j}- \b(\alpha_j)$. Notice that by assumption $d(\alpha_j)\neq 0$.

  By multiplying by $\psi(\alpha_j)$ we get
  $
  A(\alpha_j)V_{*,j}\psi(\alpha_j)-\b(\alpha_j)\psi(\alpha_j)=\psi(\alpha_j)\eps_{i_j}
  $ and since $(\bvarphi, \psi)\in \SS_{V,N,D,\tau}$, it satisfies
  $\bvarphi(\alpha_j)=V_{*,j}\psi(\alpha_j)$ and so we have
  $ (A\bvarphi-\b\psi)(\alpha_j)=\psi(\alpha_j)\eps_{i_j}.  $
  
  We now denote $ \p:=A(x)\bvarphi(x)-\psi(x)\b(x)\in \Fq[x]^{n\times 1}$. Fix
  $1\leq i \leq n$, we claim that for any $j \notin I_i$ then $p_i(\alpha_j)=0$,
  where $p_i$ is the $i$th component of $\p$.  Indeed, if $j \notin E$, then
  $V_{*,j}=\frac{\v(\alpha_j)}{d(\alpha_j)}=A(\alpha_j)^{-1}\b(\alpha_j)$ and so
  since $\bvarphi(\alpha_j)=V_{*,j}\psi(\alpha_j)$, we get
  $\p(\alpha_j)=A(\alpha_j)\bvarphi(\alpha_j)-\psi(\alpha_j)\b(\alpha_j)=0$. On the other hand, if $j \in E\setminus I_i$ then by the choice of
  $V_{*,j}$, then $p_i(\alpha_j)=0$.

  Therefore, for $1\leq i \leq n$ and $j \notin I_i$, then $p_i(\alpha_j)=0$. Note that
  $\deg(p_i(x))<\max\{\deg(A)+N, \deg(\b)+D\}$.  On the other hand the roots of
  this polynomial are
  $L-|I_i| \geq L_{GLZ} -\tau/n = \max\{\deg(A)+N, \deg(\b)+D\}+\tau$. So we can
  conclude that $\p(x)=A(x)\bvarphi(x)-\psi(x)\b(x)=\boldsymbol{0}$.
  
  Now, since $A(x)\v(x)=d(x)\b(x)$ if we multiply this equation by $\psi(x)$ and also $\p(x)$ by $d(x)$ and we subtract both the equations we finally get 
  $\bvarphi(x)d(x)-\psi(x)\v(x)=\boldsymbol{0}$.
  
  \smallskip
  Therefore, in both cases we have
  $\bvarphi(x)d(x)-\psi(x)\v(x)=\boldsymbol{0}$.  Now, since $\y(x)=\frac{\v(x)}{d(x)}$ and $\gcd(\gcd_i(v_i),d)=1$ and $d$ is monic, there exists $R \in \Fq[x]$ such that $\bvarphi=R\v$ and $\psi=Rd$.
  Notice that for any $1\leq j \leq L$ by the key equations
  \eqref{eq:keyEqPLSWE} we get,
  $
  0=\bvarphi(\alpha_j)-\psi(\alpha_j)V_{*,j}=R(\alpha_j)[\v(\alpha_j)-V_{*,j}d(\alpha_j)].
  $ By construction, if $j\in E$, then
  $\v(\alpha_j)-V_{*,j}d(\alpha_j)\neq \boldsymbol{0}$ and so
  $R(\alpha_j)=0$. Therefore, the error locator polynomial
  $\Lambda =\prod_{j\in E}(x-\alpha_j)$ divides $R$ and so
  $(\bvarphi,\psi)\in \langle x^i\Lambda \v, x^i\Lambda d\rangle_{0\leq i <
    \delta_{N+\tau,D+\tau}}$.  Hence,
  $\SS_{V,N+\tau,D+\tau}\subseteq \langle x^i\Lambda \v, x^i\Lambda
  d\rangle_{0\leq i < \delta_{N+\tau,D+\tau}}$ and so the equality holds.
  
  Hence, we finally get
  $\SS_{V,N+\tau,D+\tau}=\langle x^i\Lambda \v, x^i\Lambda d\rangle_{0\leq i < \delta_{N+\tau,D+\tau}}$ for a draw $V$ of $Y$.
  
  \bigskip
  \ref{item2proof}) We now conclude the proof by estimating the fraction of errors for which the solution space is exactly of the form $\SS_{Y,N+\tau,D+\tau}=\langle x^i\Lambda \v, x^i\Lambda d\rangle_{0\leq i < \delta_{N+\tau,D+\tau}}$.
  Now for a generic instance of $Y$ recall that $\langle x^i\Lambda \v, x^i\Lambda d\rangle_{0\leq i < \delta_{N+\tau,D+\tau}}\subseteq \SS_{Y,N+\tau,D+\tau}=\ker(M_{Y,N+\tau,D+\tau}),$ then $\dim(\ker(M_{Y,N+\tau,D+\tau}))\geq \delta_{N+\tau,D+\tau}$.
  By the Rank-Nullity Theorem we have that 
  $
  \rank(M_{Y,N+\tau,D+\tau})\leq n(N+\tau)+D+\tau-\delta_{N+\tau,D+\tau}=:\rho.
  $
  On the other hand, as proved above, there exists a draw $V_{*,j}$ of $Y_{*,j}$, for $j\in E$, such that $\rank(M_{V,N+\tau,D+\tau})=\rho$.
  This means that there exists a nonzero $\rho$-minor in $M_{V,N+\tau,D+\tau}$. We consider the same nonzero $\rho$-minor in $M_{Y,N+\tau,D+\tau}$ as a multivariate polynomial $C$ whose indeterminates are $(y_{i,j})_{\substack{1\leq i \leq n\\j\in E}}$. We remark that we showed the existence of a draw $V_{*,j}$ of $Y_{*,j}$, for $j\in E$, such that $C(V_{*,j})$ is non zero. Hence, the polynomial $C$ is nonzero.
  For any matrix $Y$ such that $(Y_{*,j})_{j\in E}$ is not a root of $C$, then $\SS_{Y,N+\tau,D+\tau}= \langle x^i\Lambda \v, x^i\Lambda d\rangle_{0\leq i < \delta_{N+\tau,D+\tau}}$. Note that the total degree of the polynomial $C$ is at most $D+\tau$, since only the last $D+\tau$ columns of the matrix $M_{Y,N+\tau,D+\tau}$ contains the variables $(y_{i,j})_{\substack{1\leq i \leq n\\j\in E}}$ (see Remark~\ref{remark:coeffMatrIRS}).
  
  Finally, by the Schwartz-Zippel Lemma, the polynomial $C$ cannot be zero in more than $(D+\tau)/q$ fractions of its domain. Therefore, we can conclude that the probability that \\ $\SS_{Y,N+\tau,D+\tau}\neq \langle x^i\Lambda \v, x^i\Lambda d\rangle_{0\leq i < \delta_{N+\tau,D+\tau}}$ is at most $(D+\tau)/q$.
\end{proof}

\section{Early Termination Strategy}\label{sec:ET}

All the bounds on the number of evaluations $L$ introduced so far, \ie
$L_{KPSW}$ and $L_{GLZ}$ (see Proposition~\ref{prop:kaltoNbpointsUniqueness} and
Theorem~\ref{thm:ISIT2}), depend on the bounds $N,D$ and $\tau$. Therefore, if
$N, D, \tau$ overestimate the degrees $\v$ and $d$ and the actual number of
errors, the corresponding $L_{KPSW}, L_{GLZ}$ would be too big compared to the
number we really need.  An approach to overcome this problem consists in the
introduction of an \textit{early termination} strategy whose goal is to decrease
the number of evaluations needed to recover a solution without knowing the
actual degrees of the solution and the number of errors. This strategy was first
proposed in \cite{kaltofen_early_2017} and it was based on the introduction of
some parameters that try to estimate the \textit{degrees of the solution} $\y(x)=\v(x)/d(x)$ of the PLS \eqref{eq:isitPLS}.  In this work, we
revisit this method (Proposition~\ref{prop:stopcritkalto}), by introducing some
parameters $\nu, \vartheta$ which represent attempts to find \textit{the actual degrees
of the key equation solution} $(\Lambda \v, \Lambda d)$, and a criterion which allows us to check if these
parameters $\nu, \vartheta$ upper bound these degrees. This makes our strategy sensitive to the \textit{actual number of errors} (instead of the upper bound $\tau$ which can be imprecise) and to the actual degrees of the solution $\y(x)$ (see Remark~\ref{rem:diffKalto}). 

Another significant difference from  \cite{kaltofen_early_2017} consists in the reduction of the number of evaluations which guarantees to uniquely recover the solution in presence of random errors.

We divide this section into two parts. First we assume to know a fixed upper bound on the number of errors that the nodes could make. We then notice that in the early termination algorithms (Algorithms~\ref{algo:ETKalto}, \ref{algo:ourEarlyTerm}, \ref{algo:ETLinearRS}, \ref{algo:ETLinearIRS}) the number of evaluations is iteratively incremented. The number of errors depends on this number of evaluations and so it can be hard to find a valid upper bound for it.
For this reason (as in \cite{kaltofen_early_2017}) in the second part, we introduce a \textit{linear error bound} which depends on the variable number of evaluations and on an error rate $\rho_E$.
In both cases, we propose two counting for $L$; one that can correct any error and one which derives from our Theorem \ref{thm:ISIT2} for the scenario where errors are random.
\subsection{Fixed error bound} \label{sec:fixed-error-bound}
We start by recalling and introducing some useful notations that we will use throughout this section.
Let $A(x)\y(x)=\b(x)$ be a PLS and $\y(x)=\frac{\v(x)}{d(x)}$ with $\gcd(\gcd_i(v_i),d)=1$ with $d$ monic. Consider $1\leq N,D\leq L$ such that $N>\deg(\v)$ and $D>\deg(d)$ and $\deg(A), \deg(\b)$.

In this first part of the section we also assume to know $\tau\geq |E|$ (see Section~\ref{sec:PLSwE}).
Let $\nu, \vartheta\geq 1$ and denote
\begin{equation}\label{eq:LLnutheta}
	\LL(\nu,\vartheta):=\min\{\max\{N-1+\vartheta, D-1+\nu\},\max\{\deg(A)+\nu, \deg(\b)+\vartheta\}\}.
\end{equation}

\subsubsection{Bounding $L$ for any error}
From \cite{kaltofen_early_2017} we can derive the following proposition, adapted to our choice of the parameters $\nu, \vartheta$ that gives a criterion for an early termination algorithm that can correct any error. 
\begin{proposition}\label{prop:stopcritkalto}
  Let $\vartheta, \nu, \geq 1$ and consider
  $L\geq \LL(\nu,\theta)+\tau$ evaluation points
  $\{\alpha_1, \ldots, \alpha_L\}$, where $\tau \ge |E|$. 
  Fix $A(x)\y(x)=\b(x)$ and denote
  $\y(x) = \frac{\v(x)}{d(x)}$ with $\gcd(\gcd_i(v_i),d)=1$ and $d$
  monic. 
	
  Then the solution space of the key equations \eqref{eq:keyEqPLSWE} with input $Y, \nu, \vartheta$ is
  $
  \SS_{Y,\nu,\vartheta}=\langle x^i\Lambda \v, x^i\Lambda d\rangle_{0\leq i
    < \delta_{\nu, \vartheta}}
  $
  where
  $
  \delta_{\nu, \vartheta}=\min\{\nu-(\deg(\v)+|E|), \vartheta-(\deg(d)+|E|)\}.
  $
\end{proposition}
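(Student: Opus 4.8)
The plan is to mirror part~\ref{item1proof} of the proof of Theorem~\ref{thm:ISIT2}, but in the deterministic regime: since here we spend the full budget $\tau$ rather than $\lceil\tau/n\rceil$, every error pattern (not merely a generic one) can be treated directly, so the partition of $E$ and the Schwartz--Zippel argument are no longer needed. Setting $\nu=N+\tau$ and $\vartheta=D+\tau$ recovers Proposition~\ref{prop:kaltoNbpointsUniqueness}. The inclusion $\langle x^i\Lambda\v,x^i\Lambda d\rangle_{0\le i<\delta_{\nu,\vartheta}}\subseteq\SS_{Y,\nu,\vartheta}$ is the easy direction: arguing as in Remark~\ref{remark:loc}, each generator satisfies the key equations (for $j\notin E$ one uses $\Lambda(\alpha_j)\neq 0$ and $y_{i,j}=v_i(\alpha_j)/d(\alpha_j)$; for $j\in E$ both sides vanish because $\Lambda(\alpha_j)=0$), and the degree bounds $\deg(x^i\Lambda v_k)=i+|E|+\deg(v_k)<\nu$ and $\deg(x^i\Lambda d)<\vartheta$ hold exactly when $i<\delta_{\nu,\vartheta}$.

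For the reverse inclusion I would fix $(\bvarphi,\psi)\in\SS_{Y,\nu,\vartheta}$ and split according to which term realises the minimum defining $\LL(\nu,\vartheta)$ in \eqref{eq:LLnutheta}. If $\LL(\nu,\vartheta)=\max\{N-1+\vartheta,D-1+\nu\}$, I consider for each $i$ the polynomial $\varphi_i d-v_i\psi$, whose degree is at most $\max\{\nu+D,N+\vartheta\}-2=\LL(\nu,\vartheta)-1$; at every correct point $j\notin E$ the key equation forces $(\varphi_i d-v_i\psi)(\alpha_j)=0$, and there are at least $L-|E|\ge L-\tau\ge\LL(\nu,\vartheta)$ such points, so each polynomial has more roots than its degree and vanishes identically. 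If instead $\LL(\nu,\vartheta)=\max\{\deg(A)+\nu,\deg(\b)+\vartheta\}$, I set $\p:=A\bvarphi-\psi\b$ exactly as in the second half of the proof of Theorem~\ref{thm:ISIT2}; each $p_i$ has degree $<\LL(\nu,\vartheta)$ and vanishes at the $\ge L-\tau$ correct points, hence $\p=\boldsymbol{0}$, and combining $A\bvarphi=\psi\b$ with the system relation $A\v=d\b$ and using that $A$ is nonsingular again yields $\bvarphi d-\psi\v=\boldsymbol{0}$. In both branches we reach $\bvarphi d=\psi\v$.

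From here the argument is identical to the closing paragraph of Theorem~\ref{thm:ISIT2}: because $\gcd(\gcd_i(v_i),d)=1$ and $d$ is monic, each irreducible factor of $d$ fails to divide some component $v_{i_0}$, which forces $d\mid\psi$; writing $\psi=Rd$ then gives $\bvarphi=R\v$ with $R\in\Fq[x]$. Evaluating the key equations at each $j\in E$ yields $R(\alpha_j)\,[v_i(\alpha_j)-y_{i,j}d(\alpha_j)]=0$, and since the $j$-th column is erroneous there is an index $i$ with $v_i(\alpha_j)-y_{i,j}d(\alpha_j)\neq 0$, forcing $R(\alpha_j)=0$; hence $\Lambda\mid R$, say $R=S\Lambda$. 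Finally the constraints $\deg(\bvarphi)<\nu$ and $\deg(\psi)<\vartheta$ become $\deg(S)+|E|+\deg(\v)<\nu$ and $\deg(S)+|E|+\deg(d)<\vartheta$, i.e. $\deg(S)<\delta_{\nu,\vartheta}$, placing $(\bvarphi,\psi)=(S\Lambda\v,S\Lambda d)$ in $\langle x^i\Lambda\v,x^i\Lambda d\rangle_{0\le i<\delta_{\nu,\vartheta}}$ and giving the desired equality.

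The main obstacle I anticipate is bookkeeping rather than conceptual: I must get the two degree counts right so that in each branch the bound $\LL(\nu,\vartheta)-1$ on the auxiliary polynomial's degree sits strictly below the guaranteed number $L-\tau\ge\LL(\nu,\vartheta)$ of correct evaluation points, and in the second branch I must be careful that the passage from $A\bvarphi=\psi\b$ to $\bvarphi d=\psi\v$ genuinely uses both the nonsingularity of $A$ and $A\v=d\b$. Everything else follows the established template, so no new idea beyond replacing $\lceil\tau/n\rceil$ by $\tau$ and $N+\tau,D+\tau$ by the free parameters $\nu,\vartheta$ is required.
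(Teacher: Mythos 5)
Your proof is correct and follows essentially the same route as the paper's: the same two-branch case split on which term realises $\LL(\nu,\vartheta)$, the same auxiliary polynomials, and the same closing argument via coprimality, the error locator, and the degree constraint $\deg(S)<\delta_{\nu,\vartheta}$. The only cosmetic difference is in the first branch, where the paper multiplies the key equations by $\Lambda$ and counts all $L$ roots of $\Lambda(v_i\psi-d\varphi_i)$, whereas you count only the $L-|E|\ge L-\tau$ correct points as roots of $v_i\psi-d\varphi_i$; both degree counts close identically.
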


\begin{proof}
We now prove that
  $\SS_{Y,\nu,\vartheta} \subset \langle x^i\Lambda \v, x^i\Lambda
  d\rangle_{0\leq i < \delta_{\nu, \vartheta}}$, the other inclusion being
  straightforward. Let $(\bvarphi, \psi)\in \SS_{Y,\nu,\vartheta}$. For
  any $1\leq i \leq n$ and $1\leq j \leq L$, we have
  \begin{align}
    \varphi_i(\alpha_j)=y_{i,j}\psi(\alpha_j), \quad
    (\Lambda v_i) (\alpha_j) = y_{i,j} (\Lambda d) (\alpha_j) \label{eqb}
  \end{align}

  Assume that $L \ge \max\{D-1+\nu, N-1+\vartheta\} +\tau$.  If we multiply the first equation 
  in \eqref{eqb} by $(\Lambda d)(\alpha_j)$ and the second by $\psi(\alpha_j)$ and we subtract them, we get
  $
  (\Lambda (v_i \psi - d \varphi_i))(\alpha_j)=0
  $ for any $1\leq j \leq L$.  Now, since the polynomial
  $\Lambda (v_i \psi - d \varphi_i)$ has $L$ roots and degree
  $< |E| + \max(\deg(v) + \vartheta , \deg(d) + \nu )$ which is smaller than $L$
  by assumption,  so it is the zero polynomial.
  
  On the other hand, assume
  $L \ge \max\{\deg(A)+\nu, \deg(\b)+\vartheta\} + \tau$.  Since for any
  $j \notin E$, $\y_j=A(\alpha_j)^{-1}\b(\alpha_j)$. Combining this equation with $\bvarphi(\alpha_j)=\y_{j}\psi(\alpha_j)$ we get
  $(A\bvarphi - \psi\b)(\alpha_j) = 0$.  Now, notice that the vector of
  polynomials $A(x)\bvarphi(x)-\psi(x)\b(x)$ has $L - |E|$ roots and
  degree $< \max\{\deg(A)+\nu, \deg(\b)+ \vartheta\}$ which is smaller
  than $L-|E|$ by assumption. So $A(x)\bvarphi(x)
  =\psi(x)\b(x)$. Combined with $A(x) \v(x) = d(x) \b(x) $, we get
  $
  A(x)[\bvarphi(x)d(x)-\v(x)\psi(x)]=\boldsymbol{0}.
  $
  Since $A(x)$ is full rank, we obtain $\bvarphi(x)d(x)-\v(x)\psi(x)=\boldsymbol{0}$.

  Since  $\gcd(\gcd_i(v_i),d)=1$ and $d$ monic, there exists $P\in \Fq[x]$ such that
  $(\bvarphi, \psi)=(P\v, P d)$.

  Now $\bvarphi(\alpha_j)=\y_{j}\psi(\alpha_j)$ yields
  $(P(\v - \y d))(\alpha_j) = 0$ and so $P(\alpha_j) = 0$ for $j \in
  E$. This means that $\exists P'\in \Fq[x], P = \Lambda P'$. Finally,
  $(\bvarphi, \psi)=P'(\Lambda \v, \Lambda d)$ and the degree constraints
  on $(\bvarphi, \psi)$ imply $\deg P' < \delta_{\nu, \vartheta}$
  which concludes our proof.
\end{proof}

Proposition~\ref{prop:stopcritkalto} gives a criterion to check if
$(\nu,\vartheta)$ are upper bounds on the degree of the solution
$(\Lambda \v, \Lambda d)$. Indeed, $\nu > \deg(\v)+|E|$ and $ \vartheta > \deg(d)+|E|$)
\iff $\delta_{\nu, \vartheta} > 0$ \iff
$\SS_{Y,\nu,\vartheta} \neq\{(\boldsymbol{0},0)\}$. 

Let
\texttt{Check}$(L,\nu,\vartheta)$ be the function that returns the Boolean \linebreak
$\SS_{Y,\nu,\vartheta}\ \verb+!=+\ \{(\boldsymbol{0},0)\}$. If \texttt{Check}
returns \verb+true+, then we can call \linebreak
\texttt{FindSolution}$(Y,\nu,\vartheta)$
to recover $(\v, d)$ from $\SS_{Y,\nu,\vartheta}$ (see
Remark~\ref{remark:coeffMatrIRS}).  We are now ready to introduce
Algorithm~\ref{algo:ETKalto}, whose correctness follows from
Proposition~\ref{prop:stopcritkalto}.

\begin{algorithm}[t]
	\SetKwData{Left}{left}\SetKwData{This}{this}\SetKwData{Up}{up}
	\SetKwFunction{Union}{Union}\SetKwFunction{FindCompress}{FindCompress}
	\SetKwInOut{Input}{Input}\SetKwInOut{Output}{Output}%
	\Input{ a stream of vectors $Y=(\y_j)$ for $j= 1,2, \ldots$ which is
		extensible on demand, where
		$\y_j=\frac{\v(\alpha_j)}{d(\alpha_j)}+\e_j$, $N>\deg(\v)$, $D>\deg(d)$,
		$\tau \ge |E(L)|=|\{j \mid \e_j \neq \boldsymbol{0}\}|$, $\deg(A)$, $\deg(\b)$}
	\Output{$(\v,d)$ the solution of \eqref{eq:isitPLS} }
	\BlankLine
	
	$L \leftarrow \LL(1,1)+\tau$ \label{line:nbeva}\;
	\While{\texttt{true}}
	{
		\ForEach{$\nu, \vartheta$ with $\LL(\nu, \vartheta)+\tau==L$}{
			\If{\texttt{Check}$(L,\nu,\vartheta)$ \label{line:check2}}{
				\KwRet{\texttt{FindSolution}$((\y_j), L,\nu, \vartheta)$}\label{line:findSolution2}\;}
		}
		$L\leftarrow L+1$; Require a new $\y_j$\;
	}
\caption{Early Termination for PLSwE for fixed error bound $\tau$.}\label{algo:ETKalto}
\end{algorithm}
\DecMargin{1em} \IncMargin{1em}

Notice that \texttt{Check} and \texttt{FindSolution} perform the same
computation: they compute a basis of the module generated by solutions of the
key equation. Note also that in Algorithm~\ref{algo:ETKalto} the number of
evaluations varies, which could affect the number of errors. Therefore, we denote
$|E(L)|:=|\{1\leq j \leq L \mid \e_j\neq 0\}|$ instead of $|E|$ to stress out the dependency in $L$.

We now analyze the termination of Algorithm~\ref{algo:ETKalto}.

\begin{proposition}
  \label{prop:stopcritAlgo1}
  Algorithm~\ref{algo:ETKalto} terminates when $L$ equals
  $L^s := \LL(\deg(\v), \deg(d))+|E(L^s)|+1 + \tau$.
\end{proposition}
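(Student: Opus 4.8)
The plan is to reduce the termination analysis to understanding, for each value of the loop variable $L$, precisely when the inner loop finds a pair $(\nu,\vartheta)$ on which \texttt{Check} returns \texttt{true}. By Proposition~\ref{prop:stopcritkalto}, for every pair the algorithm considers (those with $\LL(\nu,\vartheta)+\tau=L$, so that the hypothesis $L\ge \LL(\nu,\vartheta)+\tau$ of that proposition holds, using also $\tau\ge |E(L)|$), we have $\SS_{Y,\nu,\vartheta}=\langle x^i\Lambda\v, x^i\Lambda d\rangle_{0\le i<\delta_{\nu,\vartheta}}$. Hence \texttt{Check}$(L,\nu,\vartheta)$ returns \texttt{true} exactly when $\delta_{\nu,\vartheta}>0$, i.e. when $\nu\ge \deg(\v)+|E(L)|+1$ and $\vartheta\ge \deg(d)+|E(L)|+1$.

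First I would record two facts about the function $\LL(\nu,\vartheta)$ of \eqref{eq:LLnutheta}: it is non-decreasing in each argument separately, and $\LL(\nu+1,\vartheta+1)=\LL(\nu,\vartheta)+1$ (both inner maxima increase by exactly $1$ under the simultaneous shift, so their minimum does too). Setting $g(e):=\LL(\deg(\v)+e+1,\deg(d)+e+1)+\tau$, these give $g(e)=g(0)+e$, and the same bookkeeping that yields $\LL(\deg(\v),\deg(d))+1=\LL(\deg(\v)+1,\deg(d)+1)$ shows $g(e)=\LL(\deg(\v),\deg(d))+e+1+\tau$. I would then prove that the inner loop succeeds at value $L$ if and only if $L\ge g(|E(L)|)$: the ``only if'' is the bound $\LL(\nu,\vartheta)+\tau\ge g(|E(L)|)$ obtained by applying separate monotonicity to any admissible pair; the ``if'' follows because, starting from the minimal admissible pair $(\deg(\v)+|E(L)|+1,\deg(d)+|E(L)|+1)$ (which realizes $\LL+\tau=g(|E(L)|)$ with $\delta=1$), shifting both coordinates up by $k$ produces an admissible pair with $\LL+\tau=g(|E(L)|)+k$, so every integer $L\ge g(|E(L)|)$ is hit.

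The crux is then to locate the first $L$ at which $h(L):=L-g(|E(L)|)\ge 0$, which by the previous step is exactly the halting value. Here I would use that $|E(\cdot)|$ is non-decreasing with $|E(L+1)|-|E(L)|\in\{0,1\}$ (one extra evaluation point creates at most one new error) together with $g(e+1)-g(e)=1$, so that $h(L+1)-h(L)=1-\bigl(g(|E(L+1)|)-g(|E(L)|)\bigr)\in\{0,1\}$. Thus $h$ is non-decreasing and moves in unit steps. Since $|E(L)|\le\tau$ keeps $g(|E(L)|)\le g(\tau)$ bounded while $L\to\infty$, $h$ is eventually positive; because it cannot jump over $0$, the first value with $h(L)\ge 0$ satisfies $h(L)=0$, i.e. $L=g(|E(L)|)=\LL(\deg(\v),\deg(d))+|E(L)|+1+\tau$. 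This is the fixed-point equation defining $L^s$, so the termination value equals $L^s$.

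The main obstacle I anticipate is precisely the self-referential dependence on $|E(L^s)|$: because the number of errors is itself a function of the gradually incremented number of evaluations, one cannot merely minimize $\LL(\nu,\vartheta)+\tau$ over admissible pairs at a \emph{fixed} error count. The discrete-derivative argument on $h$ is what disentangles this, converting the fixed-point definition of $L^s$ into the statement that $h$ first vanishes exactly at $L^s$. A minor point to verify along the way is that the initial value $L=\LL(1,1)+\tau$ gives $h\le 0$ (true by separate monotonicity, since $\deg(\v)+|E|+1,\deg(d)+|E|+1\ge 1$), so that the first zero is reached by incrementing rather than overshot at initialization.
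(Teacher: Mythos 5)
Your proof is correct, and its skeleton coincides with the paper's: use Proposition~\ref{prop:stopcritkalto} to reduce \texttt{Check} to the sign of $\delta_{\nu,\vartheta}$, use monotonicity of \eqref{eq:LLnutheta} to show no admissible pair succeeds too early, and exhibit the witness pair $(\deg(\v)+|E|+1,\deg(d)+|E|+1)$ at the stopping value. Where you genuinely go beyond the paper is in handling the self-reference of $L^s$: the paper's two-line argument tacitly substitutes $|E(L^s)|$ for $|E(L)|$ when arguing that no earlier $L$ succeeds (from $\delta_{\nu,\vartheta}>0$ at count $L$ one only gets $\nu>\deg(\v)+|E(L)|$, hence $L\ge g(|E(L)|)$ in your notation, not directly $L\ge L^s$), and it does not justify why the fixed-point equation is first attained with equality. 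Your discrete-derivative argument --- $h(L)=L-g(|E(L)|)$ is integer-valued, non-decreasing with unit increments because $|E(L+1)|-|E(L)|\in\{0,1\}$ and $g(e+1)=g(e)+1$, starts non-positive at $L=\LL(1,1)+\tau$, and is eventually non-negative since $|E(L)|\le\tau$ --- closes exactly this gap and shows the first zero of $h$ is hit, not overshot. The paper's proof buys brevity; yours buys a rigorous resolution of the circular dependence of $L^s$ on $|E(L^s)|$, which is the genuinely delicate point of the statement.
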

\begin{proof}
  We start by proving by contraposition that if \linebreak
  $\LL(\nu, \vartheta)+\tau<\LL(\deg(\v), \deg(d))+|E(L^s)|+1+\tau$ then
  $\delta_{\nu,\vartheta} \le 0$, \ie the algorithm do not stop (since in this case \texttt{Check} returns false).
  Indeed, if
  $\delta_{\nu,\vartheta} > 0$, then $\deg(\v)+|E(L^s)|< \nu$ and
  $\deg(d)+|E(L^s)|< \vartheta$ and so,
  $\LL(\nu, \vartheta)\geq \LL(\deg(\v), \deg(d))+|E(L^s)|+1$ (see
  \eqref{eq:LLnutheta}).

  Then, note that for $\nu=\deg(\v)+|E(L^s)|+1$ and
  $\vartheta=\deg(d)+|E(L^s)|+1$, the number of evaluations
  $\LL(\nu,\vartheta) + \tau$ equals to $L^s$ and the algorithm stops
  ($\delta_{\nu,\vartheta} > 0$).
\end{proof}

\begin{remark}\label{rem:diffKalto}
 Compared to the number of evaluations \linebreak
$\LL(\deg(\v), \deg(d)) + 1 + 2 \tau$ of \cite[Equations 5 and
9]{kaltofen_early_2017} (taking $R^\ast = 0$, omitting the rank
drops), we can lower the evaluation bound of Proposition~\ref{algo:ETKalto}  due to
a dependency on the real number of errors $|E(L^s)|$. To the best of our
knowledge, this dependency is original in the literature.
\end{remark}

\subsubsection{Bounding $L$ for random errors}

We now present an early termination strategy applied to the PLSwE problem (Algorithm~\ref{algo:ourEarlyTerm}) which allows us to further reduce the number of evaluation points compared to \cite{kaltofen_early_2017}. Notice that we slightly modify the structure of Algorithm~\ref{algo:ETKalto} considering 
$L:=\LL(\nu, \vartheta)+\left\lceil\frac{\tau}{n}\right\rceil$
evaluation points (step~\ref{line:nbeva}).

\LinesNotNumbered
\begin{algorithm}[t]
Same as Algorithm~\ref{algo:ETKalto} except \\
Line 1: $L \leftarrow \LL(1,1)+\left\lceil \frac{\tau}{n}\right\rceil$ \\
Line 3: $\nu, \vartheta$ with $\LL(\nu, \vartheta)+\left\lceil \frac{\tau}{n}\right\rceil==L$
  \caption{Early Termination for PLSwE for fixed error bound $\tau$ for random errors.}\label{algo:ourEarlyTerm}
\end{algorithm}
\LinesNumbered
\DecMargin{1em} \IncMargin{1em}

 Algorithm~\ref{algo:ourEarlyTerm} is based on the following new result.

\begin{theorem}{\label{thm:earlyTerm1}}
  Let $\vartheta, \nu\geq 1$, consider
  $L\geq \LL(\nu,\theta)+\left \lceil \frac{\tau}{n}\right \rceil$
distinct  evaluation points $\{\alpha_1, \ldots, \alpha_L\}$. Let
  $E\subseteq \{1, \ldots, L\}$.  Moreover, fix $A(x)\y(x)=\b(x)$ and denote
  $\y(x) = \frac{\v(x)}{d(x)}$ with $\gcd(\gcd_i(v_i), d)=1$ and $d$ monic.
  
  Consider the random matrix $Y$ where we denote the columns as $\y_j:=Y_{*,j}$, such that $\y_j$ is a uniformly distributed
  element of $\Fq^{n \times 1}$ if $j \in E$, and  $\y_j=\frac{\v(\alpha_j)}{d(\alpha_j)}$  if $j \notin E$.
  Then the solution space satisfies
  $
  \SS_{Y,\nu,\vartheta}=\langle x^i\Lambda \v, x^i\Lambda d\rangle_{0\leq i < \delta_{\nu, \vartheta}}
  $
  where
  $
  \delta_{\nu, \vartheta}=\min\{\nu-(\deg(\v)+|E|), \vartheta-(\deg(d)+|E|)\},
  $
  with probability at least $1-\frac{\vartheta}{q}$.
\end{theorem}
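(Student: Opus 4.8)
The plan is to treat Theorem~\ref{thm:earlyTerm1} as the random-error analogue of Proposition~\ref{prop:stopcritkalto}, produced by the same interleaving argument that turns Proposition~\ref{prop:kaltoNbpointsUniqueness} into Theorem~\ref{thm:ISIT2}. Concretely I would rerun the proof of Theorem~\ref{thm:ISIT2} with the trial degrees $\nu,\vartheta$ playing the role of $N+\tau,D+\tau$, and with $\LL(\nu,\vartheta)$ of \eqref{eq:LLnutheta} governing the point count. First I would record the inclusion $\langle x^i\Lambda\v, x^i\Lambda d\rangle_{0\le i<\delta_{\nu,\vartheta}}\subseteq\SS_{Y,\nu,\vartheta}$, which holds for \emph{every} draw: each generator satisfies the key equations \eqref{eq:keyEqPLSWE} (at an erroneous $\alpha_j$, $j\in E$, both sides vanish because $\Lambda(\alpha_j)=0$; at a correct $\alpha_j$ the identity $y_{k,j}=v_k(\alpha_j)/d(\alpha_j)$ does the job), and it obeys the degree bounds $\nu,\vartheta$ exactly in the range $0\le i<\delta_{\nu,\vartheta}$. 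Hence only the reverse inclusion has to be secured with high probability.

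For the combinatorial core I would exhibit a single favourable draw $V$ of the erroneous columns with $\SS_{V,\nu,\vartheta}=\langle x^i\Lambda\v, x^i\Lambda d\rangle$, splitting according to which term realizes the minimum in \eqref{eq:LLnutheta}. In both cases I partition the error set $E=\bigcup_{i=1}^n I_i$ with $|I_i|\le\ceil{|E|/n}\le\ceil{\tau/n}$ (using $|E|\le\tau$), and for $j\in E$ with $j\in I_{i_j}$ I plant a single canonical discrepancy in coordinate $i_j$: in the RFR case $\v(\alpha_j)-d(\alpha_j)V_{*,j}=\eps_{i_j}$, in the PLS case $\eps_{i_j}=A(\alpha_j)V_{*,j}-\b(\alpha_j)$. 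Taking $(\bvarphi,\psi)\in\SS_{V,\nu,\vartheta}$, the point of the partition is that for each fixed $i$ the scalar polynomial $\psi v_i-d\varphi_i$ (respectively the $i$-th entry $p_i$ of $\p=A\bvarphi-\psi\b$) vanishes at every $\alpha_j$ with $j\notin I_i$, hence has at least $L-|I_i|\ge L-\ceil{\tau/n}\ge\LL(\nu,\vartheta)$ roots, while its degree stays strictly below $\max\{N-1+\vartheta,D-1+\nu\}$ (respectively $\max\{\deg(A)+\nu,\deg(\b)+\vartheta\}$) — the very term realizing $\LL(\nu,\vartheta)$ in the case at hand. Having more roots than its degree, it must be the zero polynomial. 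This forces $\bvarphi d-\psi\v=\bz$ (in the PLS case after combining with $A\v=d\b$ and invoking that $A$ is nonsingular); then $\gcd(\gcd_i(v_i),d)=1$ with $d$ monic yields $(\bvarphi,\psi)=(R\v,Rd)$, and evaluating a key equation at $j\in E$, where $\v(\alpha_j)-V_{*,j}d(\alpha_j)\neq\bz$ by construction, gives $R(\alpha_j)=0$, so $\Lambda\mid R$ and $(\bvarphi,\psi)\in\langle x^i\Lambda\v, x^i\Lambda d\rangle$.

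Finally I would upgrade this one good draw to almost all draws by the Schwartz--Zippel minor argument. For any $Y$ the always-true inclusion gives $\dim\ker(M_{Y,\nu,\vartheta})\ge\delta_{\nu,\vartheta}$, so by Rank--Nullity $\rank(M_{Y,\nu,\vartheta})\le n\nu+\vartheta-\delta_{\nu,\vartheta}=:\rho$, while the draw $V$ attains equality; thus $M_{V,\nu,\vartheta}$ has a nonzero $\rho$-minor. Reading the same minor of $M_{Y,\nu,\vartheta}$ as a polynomial $C$ in the unknown entries $(y_{i,j})_{j\in E}$, it is nonzero (it does not vanish at $V$) and, since only the last $\vartheta$ columns of $M_{Y,\nu,\vartheta}$ carry these variables and each contributes degree one (see Remark~\ref{remark:coeffMatrIRS}), $\deg C\le\vartheta$. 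By Schwartz--Zippel, $C$ vanishes on at most a $\vartheta/q$ fraction of draws; off this locus $\rank(M_{Y,\nu,\vartheta})=\rho$, the two inclusions agree in dimension, and the equality $\SS_{Y,\nu,\vartheta}=\langle x^i\Lambda\v, x^i\Lambda d\rangle_{0\le i<\delta_{\nu,\vartheta}}$ holds, giving the claimed probability $1-\vartheta/q$.

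I expect the main obstacle to be the bookkeeping in the favourable-draw step: one must check, separately in the RFR and PLS cases, that the degree of $\psi v_i-d\varphi_i$ (respectively $p_i$) is \emph{strictly} below the number $L-|I_i|$ of forced roots. This rests on the interplay of $\deg(\v)<N$, $\deg(d)<D$, $\deg(\varphi_i)<\nu$, $\deg(\psi)<\vartheta$ with the two-case definition \eqref{eq:LLnutheta} of $\LL(\nu,\vartheta)$, so the delicate point is simply confirming that the degree estimate matches the corresponding term of $\LL(\nu,\vartheta)$ in each branch. Once the degree-$\vartheta$ bound on the minor $C$ is observed, the probabilistic lifting is routine.
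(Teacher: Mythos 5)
Your proposal is correct and follows essentially the same route as the paper's own proof: the same favourable draw $V$ with a single canonical discrepancy per error column governed by the partition $E=\bigcup_i I_i$, the same two-case root-versus-degree count according to which term realizes $\LL(\nu,\vartheta)$, and the same Schwartz--Zippel argument on a $\rho$-minor of degree at most $\vartheta$. No substantive differences to report.
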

\begin{proof}
  The structure of the proof is the same as the proof of Theorem~\ref{thm:ISIT2}.
  In the first part we prove that there exists a draw of columns $\y_j$ for $j \in E$ for which the corresponding solution space $\SS_{Y,\nu,\vartheta}$  is generated by elements of the form $\langle x^i\Lambda \v, x^i \Lambda d\rangle$. More specifically recall that $\langle x^i\Lambda \v, x^i \Lambda d\rangle \subseteq \SS_{Y,\nu,\vartheta}$ is always true and so we need to prove the other inclusion in order to get the equality.
  In the second part, by using the Schwartz-Zippel Lemma we determine the bound on fraction of errors for which the solution space is not of the form $\langle x^i\Lambda \v, x^i \Lambda d\rangle$.
  
  Since here we are considering some general parameters $\nu, \vartheta$ instead
  of the bounds $N+\tau,D+\tau$, the only difference between this proof and the
  previous one consists in the first part. We still consider $E=\cup_{i=1}^n I_i$,
  such that for any $1\leq i \leq n$, $|I_i|\leq \lceil |E|/n\rceil$.  Recall
  that for any $j \in E$, we denote by $i_j$ the unique index such that
  $j \in I_{i_j}$.
  
  \smallskip We first assume that
  $\LL(\nu, \vartheta)=\max\{N-1+\vartheta, D-1+\nu\}$.  Construct a matrix $V$
  (as in the proof of Theorem~\ref{thm:ISIT2}) such that
  $V_{*,j}=\frac{\v(\alpha_j)}{d(\alpha_j)}$ if $j \notin E$, and
  $V_{*,j}\in \Fq^{n \times 1}$ is chosen so that
  $\v(\alpha_j)-d(\alpha_j)V_{*,j}=\eps_{i_j}$ if $j \in E$ ($\eps_{i}$ is the
  $i$th canonical vector).

  Let $(\bvarphi, \psi)\in \SS_{V,\nu,\vartheta}$. We now
  denote $\p := \psi\v-d\bvarphi$, and $p_i$ its $i$th component. We now show that
  $\p(x) = \bz$.

  We already have
  $\p(\alpha_j) = (\psi\v-d\bvarphi)(\alpha_j)=0$ for $j \notin E$. For $j \in E$, we can
  combine $\v(\alpha_j)-d(\alpha_j)V_{*,j}=\eps_{i_j}$ and
  $\bvarphi(\alpha_j)=V_{*,j}\psi(\alpha_j)$ to get
  $\p(\alpha_j)  =
  \psi(\alpha_j)\v(\alpha_j)-d(\alpha_j)\psi(\alpha_j)V_{*,j}=\psi(\alpha_j)\eps_{i_j}.$
 

  
  Fix $1\leq i \leq n$, then for any $j \notin I_i$ then
  $p_i(\alpha_j)
  =0$. Now, notice
  that $p_i$ has degree $\leq \max\{\vartheta+N-1, D-1+\nu\}-1$
  and its number of roots is
  $L-|I_i| \geq L-\lceil|E|/n\rceil \geq \LL(\nu,\vartheta)-\lceil|\tau|/n\rceil=\max\{\nu+D-1,
  \vartheta+N-1\}$ and so it is the zero polynomial. Therefore,
  $\p(x)=\psi(x)\v(x)-d(x)\bvarphi(x) = \boldsymbol{0}$. The rest follows by observing that
  $\y(x)=\frac{\v(x)}{d(x)}$ is such that $\gcd(\gcd_i(v_i),d)=1$ and $d$ is monic. So, we can conclude that
  $\SS_{V,\nu,\vartheta}=\langle x^i\Lambda\v, x^i\Lambda d\rangle_{0\leq i <
    \delta_{\nu,\vartheta}}$.
  
  \smallskip We now assume that
  $\LL(\nu, \vartheta)=\max\{\deg(A)+\nu, \deg(\b)+\vartheta\}$.  Construct a
  matrix $V$ such that $V_{*,j}=\frac{\v(\alpha_j)}{d(\alpha_j)}$ if
  $j \notin E$, and $V_{*,j}\in \Fq^{n \times 1}$ is chosen so that
  $\v(\alpha_j)-d(\alpha_j)V_{*,j}=-A(\alpha_j)^{-1}d(\alpha_j)\eps_{i_j}.$ As in the proof of Theorem~\ref{thm:ISIT2}, for $j\in E$ then
  $\psi(\alpha_j)\eps_{i_j}=A(\alpha_j)\psi(\alpha_j) V_{*,j}
  -\b(\alpha_j)\psi(\alpha_j)=(A\bvarphi-\b\psi)(\alpha_j)$

  We now denote $\p:=A\bvarphi-\b\psi$ and by $p_i$ its $i$th component.  Fix
  $1\leq i \leq n$, then we observe that for any $j \notin I_i$ then
  $p_i(\alpha_j)=0$ (by the same argument of the proof of
  Theorem~\ref{thm:ISIT2}). Now, notice that
  $\deg(p_i)\leq \max\{\deg(A)+\nu, \deg(\b)+\vartheta\}-1$ and that the number
  of roots is $L-|I_i| \geq \max\{\deg(A)+\nu, \deg(\b)+\vartheta\}$ and so
  $p_i=0$. Therefore, $\p(x)=A(x)\bvarphi(x)-\b(x)\psi(x)=\boldsymbol{0}$.  The
  proof then follows by observing that $\y(x)=\frac{\v(x)}{d(x)}$ is the only
  solution of the linear system $A(x) \y(x) = \b(x)$ and is such that
  $\gcd(\gcd(v_i),d)=1$ with $d$ monic (with the same argument than the proof of
  Theorem~\ref{thm:ISIT2}).
\end{proof}

Therefore, under the assumptions of Theorem~\ref{thm:earlyTerm1}, with $L\geq \LL(\nu, \vartheta)+\left\lceil \frac{\tau}{n}\right \rceil$ evaluation points, then 
{\footnotesize
  \begin{equation}\label{eq:solSpaceEarly}
    \underbrace{\SS_{Y,\nu,\vartheta}  = \langle x^i\Lambda \v, x^i\Lambda d
      \rangle_{0\leq i < \delta_{\nu, \vartheta }}}_{\small \text{with probability at least } 1-\frac{\vartheta}{q}} \begin{cases}
      =\{(\boldsymbol{0}, 0)\} &\Longleftrightarrow \delta_{\nu, \vartheta }\leq 0\\
      \neq \{(\boldsymbol{0}, 0)\} &\Longleftrightarrow \delta_{\nu, \vartheta }> 0
    \end{cases}
  \end{equation}}

Therefore, the $\texttt{Check}(L,\nu,\vartheta)$ function remains unchanged but
it could return an incorrect answer with probability $\le \frac{\vartheta}{q}$.
On the other hand, the $\texttt{FindSolution}$ is slightly different from the
one that we have previously defined. Indeed, it could happen that the
$\Fq[x]$-module $\MM$ generated by solutions in $\SS_{Y,\nu\vartheta}$ has $\rank(\MM)>1$, in
which case the function outputs a failure message. Note that even if
$\rank(\MM)=1$, $\texttt{FindSolution}$ could return an incorrect output; for
instance if
$\SS_{Y,\nu,\vartheta} \neq \langle x^i\Lambda \v, x^i\Lambda d \rangle_{0\leq i
  < \delta_{\nu, \vartheta }} = \{(\bz,0)\}$. Remark that both these problems
can only happen if
$\SS_{Y,\nu,\vartheta} \neq \langle x^i\Lambda \v, x^i\Lambda d \rangle_{0\leq i
  < \delta_{\nu, \vartheta }}$, so they have probability
$\le \frac{\vartheta}{q}$.

We now better analyze how Algorithm~\ref{algo:ourEarlyTerm} works:
\begin{itemize}
\item if $\SS_{Y,\nu,\vartheta}=\{(\boldsymbol{0},0)\}$, by
  \eqref{eq:solSpaceEarly}, the parameters $\nu, \vartheta$ must be too small
  compared to $\deg(v)+|E|, \deg(d)+|E|$. Hence, in this case Algorithm~\ref{algo:ourEarlyTerm} increments the number of evaluations.
\item Otherwise, if $\SS_{Y,\nu,\vartheta}\neq \{(\boldsymbol{0},0)\}$ then
  \begin{itemize}
  \item by Theorem~\ref{thm:earlyTerm1}, with probability at least
    $1-\frac{\vartheta}{q}$ we have
    $\SS_{Y,\nu,\vartheta} = \langle x^i\Lambda \v, x^i\Lambda d \rangle_{0\leq
      i < \delta_{\nu, \vartheta}}$ which is nontrivial. By
    \eqref{eq:solSpaceEarly}, $\delta_{\nu, \vartheta}> 0$ and so the
    $\Fq[x]$-module generated by solutions of PLSwE is uniquely
    generated. Hence, the \linebreak
    $\texttt{FindSolution}$ function returns $(\v,d)$.
  \item On the other hand, with probability at most $\le \frac{\vartheta}{q}$ we
    have
    $\SS_{Y,\nu,\vartheta} \neq \langle x^i\Lambda \v, x^i\Lambda d
    \rangle_{0\leq i < \delta_{\nu, \vartheta}}$,
    and $\texttt{FindSolution}$ either returns a failure message or another
    solution $(\v', d')\neq (\v,d)$.
  \end{itemize}
\end{itemize}

\begin{remark}
  \label{rk:forEach2attempts}
  We can optimize the steps 2 and 3 of both Algorithms~\ref{algo:ETKalto} and
  \ref{algo:ourEarlyTerm} respectively by only testing two specific
  $(\nu,\vartheta)$ instead of all those giving a fixed $L$.  The goal is to
  make early termination algorithms have a smaller failure probability, and
  incidentally to make them faster.

 So we should try to find which $(\nu,\vartheta)$ maximize
  $\delta_{\nu, \vartheta}$ among those that give the same number of evaluations
  $L$. Indeed, our goal is to have $\delta_{\nu, \vartheta} > 0$. The two
  candidates are $(\nu_1,\vartheta_1) = (\LL-(D-1),\LL-(N-1))$ and
  $(\nu_2,\vartheta_2) = (\LL-\deg(A),\LL-\deg(\b))$ where
  $\LL := L - \ceil{\frac{\tau}{n}}$.

  We now show that any $(\nu,\vartheta)$ such that $\LL = \LL(\nu,\vartheta)$, we
  have
  $(\nu,\vartheta) \le (\nu_1,\vartheta_1) \textrm{ or } (\nu,\vartheta) \le
  (\nu_2,\vartheta_2)$ (for the partial component-wise ordering). Indeed, either
  $\LL = \LL(\nu,\vartheta) = max(D-1+\nu,N-1+\vartheta)$ and then
  $(\nu,\vartheta) \le (\nu_1,\vartheta_1)$, or
  $\LL = \LL(\nu,\vartheta) = max(\deg(A)+\nu,\deg(\b)+\vartheta)$ and then
  $(\nu,\vartheta) \le (\nu_2,\vartheta_2)$.
  
  Remark that if $D-1\le\deg(A), N-1 \le \deg(\b)$ then we should only try
  $(\nu_1,\vartheta_1)$ because $(\nu_1,\vartheta_1) \ge (\nu_2,\vartheta_2)$
  and $\LL(\nu_1,\vartheta_1) = \LL$. Similarly if
  $D-1\ge\deg(A), N-1 \ge \deg(\b)$ then we should only try
  $(\nu_2,\vartheta_2)$. However, if $D-1 \le \deg(A), N-1 \ge \deg(\b)$, we
  should try both candidates because they are not comparable and they both lead
  to $\LL(\nu_1,\vartheta_1) = \LL(\nu_2,\vartheta_2) =\LL$, and as a
  consequence of the same number of evaluations.
\end{remark}

\begin{proposition} \label{prop:LstopIRSFixed}
  Algorithm~\ref{algo:ourEarlyTerm} terminates with at most $L^{s}$ evaluation points,
  where $L^{s}$ is such that \linebreak
  $L^{s}=\LL(\deg(\v),\deg(d))+|E(L^{s})|+1+\left\lceil\frac{\tau}{n}\right\rceil$. Its output is correct
  with probability at least $1- \frac{2(D+E(L^{s})+1)(\deg(\Lambda\v, \Lambda d) + 1)}{q}$.
\end{proposition}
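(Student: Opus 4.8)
The plan is to prove the two assertions of Proposition~\ref{prop:LstopIRSFixed} separately: first the termination bound $L^s$, then the success probability.

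\textbf{Termination.} For the termination claim, I would mirror the structure of the proof of Proposition~\ref{prop:stopcritAlgo1}, replacing the additive term $\tau$ by $\left\lceil\frac{\tau}{n}\right\rceil$ throughout. The key point is a deterministic statement: whenever $\delta_{\nu,\vartheta}>0$ holds, we have $\deg(\v)+|E(L^s)|<\nu$ and $\deg(d)+|E(L^s)|<\vartheta$, and so by the definition~\eqref{eq:LLnutheta} of $\LL(\nu,\vartheta)$ we get $\LL(\nu,\vartheta)\geq \LL(\deg(\v),\deg(d))+|E(L^s)|+1$. Contrapositively, if $L=\LL(\nu,\vartheta)+\left\lceil\frac{\tau}{n}\right\rceil$ is strictly below $L^s$, then $\delta_{\nu,\vartheta}\leq 0$, and by the dichotomy~\eqref{eq:solSpaceEarly} we have $\SS_{Y,\nu,\vartheta}=\{(\boldsymbol{0},0)\}$, so \texttt{Check} returns false and the algorithm increments $L$. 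Conversely, at the specific choice $\nu=\deg(\v)+|E(L^s)|+1$ and $\vartheta=\deg(d)+|E(L^s)|+1$ we have $\delta_{\nu,\vartheta}>0$ and $\LL(\nu,\vartheta)+\left\lceil\frac{\tau}{n}\right\rceil=L^s$, so the algorithm stops at $L^s$ at the latest. I note that this termination argument is purely about the dimension count $\delta_{\nu,\vartheta}$ and does \emph{not} rely on the solution space actually having the form $\langle x^i\Lambda\v,x^i\Lambda d\rangle$; it only uses that a nontrivial solution space forces $\delta>0$, which holds unconditionally by the inclusion established in Theorem~\ref{thm:earlyTerm1}.

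\textbf{Success probability.} The harder part is the probability bound, and it is here that I would spend the most care. The algorithm makes several calls to \texttt{Check} (and at most one successful call to \texttt{FindSolution}) as $L$ ranges up to $L^s$, and \emph{each} such call can err with probability at most $\frac{\vartheta}{q}$ by Theorem~\ref{thm:earlyTerm1}. The output is correct provided none of these calls returns a wrong answer, so I would bound the total failure probability by a union bound over all $(\nu,\vartheta)$ tested. By Remark~\ref{rk:forEach2attempts}, for each value of $L$ only the two candidate pairs $(\nu_1,\vartheta_1)$ and $(\nu_2,\vartheta_2)$ need to be tested, which accounts for the factor $2$ in the claimed bound. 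The value of $L$ ranges over roughly $\deg(\Lambda\v,\Lambda d)+1$ successive integers before termination (the number of iterations is controlled by the gap between the initial $\LL(1,1)+\left\lceil\frac{\tau}{n}\right\rceil$ and $L^s$, which is governed by the degree $\deg(\Lambda\v,\Lambda d)$ of the target solution), and each tested $\vartheta$ satisfies $\vartheta\leq D+|E(L^s)|+1$ at termination, bounding each individual error probability by $\frac{D+|E(L^s)|+1}{q}$. Multiplying the number of tested pairs by the per-test failure probability yields the stated bound $1-\frac{2(D+E(L^s)+1)(\deg(\Lambda\v,\Lambda d)+1)}{q}$.

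\textbf{Main obstacle.} The main obstacle is making the counting in the union bound rigorous: I must argue carefully that the relevant $\vartheta$ stays bounded by $D+|E(L^s)|+1$ across all iterations, and that the number of distinct $(\nu,\vartheta)$ pairs tested over the whole run is at most $2(\deg(\Lambda\v,\Lambda d)+1)$. This requires tracking how $\nu,\vartheta$ grow with $L$ through the constraint $\LL(\nu,\vartheta)+\left\lceil\frac{\tau}{n}\right\rceil=L$ and verifying that no tested $\vartheta$ exceeds the terminal bound. A subtlety worth flagging is the independence/dependence structure: the same random matrix $Y$ (extended on demand) underlies all the tests, so strictly speaking the events are not independent, but the union bound only needs an upper estimate on the probability of the union of the individual failure events, each of which is controlled by Theorem~\ref{thm:earlyTerm1} applied to the appropriate prefix of columns, so the bound goes through.
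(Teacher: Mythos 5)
Your proposal follows essentially the same route as the paper's proof: a deterministic argument that the algorithm cannot run past $L^{s}$ (because at $\nu'=\deg(\v)+|E(L^{s})|+1$, $\vartheta'=\deg(d)+|E(L^{s})|+1$ one has $\delta_{\nu',\vartheta'}>0$, hence $\SS_{Y,\nu',\vartheta'}\supseteq\langle x^i\Lambda\v,x^i\Lambda d\rangle\neq\{(\boldsymbol{0},0)\}$ and \texttt{Check} fires), followed by a union bound over at most $2(\deg(\Lambda\v,\Lambda d)+1)$ attempts, each failing with probability at most $\vartheta/q\le (D+E(L^{s})+1)/q$. Two points need repair. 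First, in the termination part you have the unconditional implication backwards: what holds for every draw of $Y$ is $\delta_{\nu,\vartheta}>0\Rightarrow\SS_{Y,\nu,\vartheta}\neq\{(\boldsymbol{0},0)\}$ (from the inclusion $\langle x^i\Lambda\v,x^i\Lambda d\rangle\subseteq\SS_{Y,\nu,\vartheta}$), whereas the converse --- which is what your contrapositive step ``$L<L^{s}\Rightarrow$ \texttt{Check} returns false'' actually uses via \eqref{eq:solSpaceEarly} --- is exactly the probabilistic content of Theorem~\ref{thm:earlyTerm1}. This is why the statement only claims termination with \emph{at most} $L^{s}$ points: premature stops are possible and must be charged to the failure probability (the paper's ``unfortunate case'' where $\SS_{Y,\nu,\vartheta}\neq\{(\boldsymbol{0},0)\}$ yet $\delta_{\nu,\vartheta}\le 0$), not excluded deterministically. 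Your union bound over all \texttt{Check} calls does cover these events, so the final conclusion survives, but the sentence asserting the termination analysis is independent of the form of the solution space should be dropped.

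Second, the step you flag as the ``main obstacle'' --- that every tested $\vartheta$ is at most $D+E(L^{s})+1$ --- is genuinely the nontrivial part and is left unproven. It is not a routine monotonicity check: with the two candidates of Remark~\ref{rk:forEach2attempts} one gets $\max\{\vartheta_1^{s},\vartheta_2^{s}\}=\LL(\deg(\v),\deg(d))+E(L^{s})+1-\min\{N-1,\deg(\b)\}$, and bounding this requires a case split. If $N-1\le\deg(\b)$ one uses $\LL(\deg(\v),\deg(d))\le\max\{D-1+\deg(\v),N-1+\deg(d)\}\le N+D-2$; if $N-1\ge\deg(\b)$ one uses $\LL(\deg(\v),\deg(d))\le\max\{\deg(A)+\deg(\v),\deg(\b)+\deg(d)\}=\deg(\b)+\deg(d)$, which relies on the identity $A\v=d\b$. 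Without this case analysis (or an equivalent argument) the factor $D+E(L^{s})+1$ in the stated probability is unjustified, so the proposal as written is a correct plan with one substantive step still to be carried out.
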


\begin{proof} 
  Assume first that we are always in the favorable cases, \ie
  $\SS_{Y,\nu,\vartheta}=\langle x^i\Lambda \v, x^i\Lambda d\rangle_{0\leq i <
    \delta_{\nu, \vartheta}}$ for any attempt $(\nu,\vartheta)$.  In other terms
  we assume that Algorithm~\ref{algo:ourEarlyTerm} returns the solution
  $(\Lambda \v, \Lambda d)$ of the PLS \eqref{eq:isitPLS}. Then the proof that
  Algorithm~\ref{algo:ourEarlyTerm} stops with exactly $L^s$ evaluations is
  similar to the proof of Proposition~\ref{prop:stopcritAlgo1}.

  We now analyze the unfortunate cases in order to bound the corresponding
  probability:
  \begin{enumerate}
  \item it could happen that for a certain attempt $(\nu,\vartheta)$, then
    $\SS_{Y,\nu,\vartheta} \neq \{(\boldsymbol{0}, 0)\}$ and
    $\delta_{\nu, \vartheta} \le 0$. In this case the \texttt{Check} function at
    step 4 would return \verb+false+. Then Algorithm~\ref{algo:ourEarlyTerm}
    would stop prematurely, \ie before reaching $L^s$ evaluations, with a
    failure message or an incorrect output.

  \item it could also happen that Algorithm~\ref{algo:ourEarlyTerm} reaches
    $L^s$ evaluations and parameters $(\nu,\vartheta)$ such that
    $\delta_{Y,\nu,\vartheta} > 0$ but \texttt{FindSolution} returns an
    incorrect PLS solution. Note first that \texttt{Check} must return
    \textsf{true} because
    $\SS_{Y,\nu,\vartheta} \supseteq \langle x^i\Lambda \v, x^i\Lambda
    d\rangle_{0\leq i < \delta_{\nu, \vartheta}} \neq \{(\bz, 0)\}$. However, if
    $\SS_{Y,\nu,\vartheta} \neq \langle x^i\Lambda \v, x^i\Lambda
    d\rangle_{0\leq i < \delta_{\nu, \vartheta}}$ then \texttt{FindSolution}
    will return a failure message or an incorrect output.
  \end{enumerate}

  The probability of falling in an unfortunate case is related to the number of
  $(\nu,\vartheta)$ we attempt before reaching
  $L^s=\LL(\deg(\v)+|E(L^{s})|+1,
  \deg(d)+|E(L^{s})|+1)+\left\lceil\frac{\tau}{n}\right\rceil$. The number of
  evaluations starts from $\LL(1,1) + \ceil{\frac{\tau}{n}}$ and ends at $L^s$:
  there are at most
  $\max(\deg(\v),\deg(d)) + |E(L^{s})| + 1=\deg(\Lambda \v, \Lambda d)+1$
  different evaluation counting.
  Recall that in order to minimize computations, for each evaluation counting, we
  try at most 2 affectations of parameters $(\nu_1,\vartheta_1)$ and
  $(\nu_2,\vartheta_2)$ (see Remark~\ref{rk:forEach2attempts}).
  Now any attempt $(\nu,\vartheta)$ could fail with probability
  $\le \frac{\vartheta}{q}$. It remains to upper bound $\vartheta$ among all
  attempts.
  
  Denote by $(\nu^s_1,\vartheta^s_1)$, $(\nu_2^s,\vartheta_2^s)$ the candidate
  parameters corresponding to $L^{s}$.
  
  Therefore,
  $\max\{\vartheta_1^s, \vartheta_2^s\}=\LL(\deg(\v),
  \deg(d))+E(L^{s})+1-\min\{N-1,\deg(\b)\}$. First if $N-1\leq \deg(\b)$, then
  since \linebreak
  $\LL(\deg(\v), \deg(d))+E(L^{s})+1\leq
  \max\{D-1+\deg(\v),N-1+\deg(d)\}+E(L^{s})+1 \leq N+D+E(L^{s})$, we have
  $\max\{\vartheta_1^s, \vartheta_2^s\}\leq D+E(L^{s})+1$.

  Otherwise, $N-1 \geq \deg(\b)$, and we have
  $\LL(\deg(\v), \deg(d))+E(L^{s})+1\leq \max\{\deg(A)+\deg(\v),
  \deg(\b)+\deg(d)\}+E(L^{s})+1=\deg(\b)+\deg(d)+E(L^{s})+1$ (since
  $A\v = \b d$) then
  $\max\{\vartheta_1^s, \vartheta_2^s\}\leq \deg(d)+1+E(L^{s})\leq D+E(L^{s})+1.
  $ Combining all these results we can conclude that the probability to reach
  the unfortunate cases is at most
  $\frac{2(D+E(L^{s})+1)(\deg(\Lambda\v, \Lambda d) + 1)}{q}$.
\end{proof}

\subsection{Linear error bound}\label{sec:linearBound}

Up until now, our early termination schemes have assumed that the number of
errors was bounded by a constant $\tau$. Since early termination requires more
and more evaluations, it would be interesting to have an error bound that
depends on the number of evaluation.  In this
section, 
we consider a linear error bound which depends on an \textit{error rate}.

\begin{assumption}\label{ass:nberror}
  For any number of evaluation points $L$, the number of errors $E(L)$ is
  bounded by $|E(L)| \le \rho_E L$ where $0 \le \rho_E < 1/2$.
 
\end{assumption}

This assumption comes from
\cite{kaltofen_sparse_2013,kaltofen_sparse_2014,kaltofen_early_2017}, where they
also consider the variant $|E(L)| \le \ceil{\rho_E L}$. For the sake of
simplicity, we restrict ourselves to Assumption~\ref{ass:nberror}. However, our results can be adapted to the alternative linear error bound.

\subsubsection{Bounding $L$ for any error}
We start by adapting Proposition~\ref{prop:stopcritkalto} to the special case of
linear error bound.

\begin{proposition}
  \label{prop:LinearDetectRS}
  Let $\nu, \vartheta \geq 1$ and consider
  $L = \floor{\frac{\LL(\nu,\vartheta)+1}{1-\rho_E}}$ evaluation points and
  error bound $\tau = \floor{\rho_E L}$.
  Fix $A(x)\y(x)=\b(x)$ and denote $\y(x) = \frac{\v(x)}{d(x)}$ with
  $\gcd(\gcd_i(v_i),d)=1$ and $d$ monic.
  
  Under Assumption~\ref{ass:nberror}, we have
  $  \SS_{Y,\nu,\vartheta}=\langle x^i\Lambda \v, x^i\Lambda d\rangle_{0\leq i <
    \delta_{\nu, \vartheta}}.
  $
\end{proposition}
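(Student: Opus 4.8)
The plan is to reduce Proposition~\ref{prop:LinearDetectRS} directly to the already-established Proposition~\ref{prop:stopcritkalto}, since the latter gives exactly the desired characterization of $\SS_{Y,\nu,\vartheta}$ under the hypothesis that the number of evaluation points satisfies $L \ge \LL(\nu,\vartheta) + \tau$ with $\tau \ge |E|$. Thus the entire content of the proof is to verify that the specific choices $L = \floor{\frac{\LL(\nu,\vartheta)+1}{1-\rho_E}}$ and $\tau = \floor{\rho_E L}$, together with Assumption~\ref{ass:nberror}, imply both $\tau \ge |E(L)|$ and the inequality $L \ge \LL(\nu,\vartheta) + \tau$. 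Once these two facts are in hand, Proposition~\ref{prop:stopcritkalto} applies verbatim and yields the conclusion.

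First I would dispatch the error-bound condition: by Assumption~\ref{ass:nberror} we have $|E(L)| \le \rho_E L$, and since $|E(L)|$ is an integer, $|E(L)| \le \floor{\rho_E L} = \tau$. This is immediate and requires no real work. The substantive step is the second inequality $L \ge \LL(\nu,\vartheta) + \tau$. Writing $\tau = \floor{\rho_E L} \le \rho_E L$, it suffices to show $L - \rho_E L \ge \LL(\nu,\vartheta)$, i.e. $(1-\rho_E)L \ge \LL(\nu,\vartheta)$. From the definition $L = \floor{\frac{\LL(\nu,\vartheta)+1}{1-\rho_E}}$ we get, using the standard bound $\floor{t} > t-1$,
\[
  L > \frac{\LL(\nu,\vartheta)+1}{1-\rho_E} - 1,
\]
so that $(1-\rho_E)L > \LL(\nu,\vartheta)+1 - (1-\rho_E) = \LL(\nu,\vartheta) + \rho_E \ge \LL(\nu,\vartheta)$, using $0 \le \rho_E < 1/2$. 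Hence $(1-\rho_E)L > \LL(\nu,\vartheta)$, which gives $L > \frac{\LL(\nu,\vartheta)}{1-\rho_E} \ge \LL(\nu,\vartheta) + \rho_E L \ge \LL(\nu,\vartheta) + \floor{\rho_E L} = \LL(\nu,\vartheta) + \tau$, as required.

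The step I expect to require the most care is keeping the floor-function manipulations clean, in particular making sure the strict-versus-nonstrict inequalities line up so that $(1-\rho_E)L \ge \LL(\nu,\vartheta) + \tau$ actually follows rather than being off by a rounding unit; the ``$+1$'' in the numerator of the definition of $L$ is precisely the slack that absorbs the floor in $\tau = \floor{\rho_E L}$, and I would double-check that this slack is genuinely enough by tracking the two floors simultaneously rather than bounding them separately if the naive separate bounds turn out too lossy. Assuming the arithmetic closes as sketched, the proof concludes by invoking Proposition~\ref{prop:stopcritkalto} with these values of $L$ and $\tau$, which immediately yields $\SS_{Y,\nu,\vartheta}=\langle x^i\Lambda \v, x^i\Lambda d\rangle_{0\leq i < \delta_{\nu,\vartheta}}$, completing the argument. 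No probabilistic reasoning is needed here since this is the ``correct any error'' variant, mirroring the deterministic guarantee of Proposition~\ref{prop:stopcritkalto} rather than the Schwartz--Zippel-based statements of Theorems~\ref{thm:ISIT2} and~\ref{thm:earlyTerm1}.
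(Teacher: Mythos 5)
Your proof is correct and follows essentially the same route as the paper's: both reduce the statement to Proposition~\ref{prop:stopcritkalto} by verifying that $\tau = \floor{\rho_E L} \ge |E(L)|$ (integrality of $|E(L)|$ together with Assumption~\ref{ass:nberror}) and that $L \ge \LL(\nu,\vartheta) + \tau$, using the same floor arithmetic on $L = \floor{\frac{\LL(\nu,\vartheta)+1}{1-\rho_E}}$. One slip worth flagging: your closing chain asserts $\frac{\LL(\nu,\vartheta)}{1-\rho_E} \ge \LL(\nu,\vartheta) + \rho_E L$, which after multiplying by $1-\rho_E$ is equivalent (for $\rho_E>0$) to $\LL(\nu,\vartheta) \ge (1-\rho_E)L$, the exact opposite of the inequality $(1-\rho_E)L > \LL(\nu,\vartheta)$ you had just established, so that intermediate step is false. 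It is, however, harmless and redundant: $(1-\rho_E)L > \LL(\nu,\vartheta)$ directly rearranges to $L > \LL(\nu,\vartheta) + \rho_E L \ge \LL(\nu,\vartheta) + \floor{\rho_E L} = \LL(\nu,\vartheta) + \tau$, which is precisely what your earlier ``it suffices to show $(1-\rho_E)L \ge \LL(\nu,\vartheta)$'' reduction already required, so the argument closes without that detour.
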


\begin{proof}
  We use the following notations
  $\bar{L}^\ast(\nu,\vartheta) = \frac{\LL(\nu,\vartheta)+1}{1-\rho_E}$, so that
  $L^\ast(\nu,\vartheta) = \floor{\bar{L}^\ast(\nu,\vartheta)}$, and
  $\tau^\ast(\nu,\vartheta) = \floor{\rho_E L^\ast(\nu,\vartheta)}$.

  We first show that we are under the hypotheses of
  Proposition~\ref{prop:stopcritkalto}, \ie that
  $L^\ast(\nu,\vartheta) \ge \LL(\nu,\vartheta) + \tau^\ast(\nu,\vartheta)$ and
  $\tau^\ast(\nu,\vartheta) \ge |E(L^\ast(\nu,\vartheta)|$.
  
  First $|E(L^\ast(\nu,\vartheta))| \le \rho_E L^\ast(\nu,\vartheta)$ using
  Assumption~\ref{ass:nberror}. Since $|E(L^\ast(\nu,\vartheta))| \in \N$, we
  get 
  $|E(L^\ast(\nu,\vartheta))| \le \floor{\rho_E L^\ast(\nu,\vartheta)} =
  \tau^\ast(\nu,\vartheta)$.  
  Then
  \begin{eqnarray*}
    \LL(\nu,\vartheta) + \tau^\ast(\nu,\vartheta)
      & \le & \LL(\nu,\vartheta) + \rho_E L^\ast(\nu,\vartheta) \\
      & \le & \LL(\nu,\vartheta) + \rho_E \bar{L}^\ast(\nu,\vartheta) \\
      & = & \bar{L}^\ast(\nu,\vartheta) - 1,
  \end{eqnarray*}
  and so
  $\LL(\nu,\vartheta) + \tau^\ast(\nu,\vartheta) \le
  \floor{\bar{L}^\ast(\nu,\vartheta)} = L^\ast(\nu,\vartheta)$.
\end{proof}

Therefore, we can use the evaluation counting
$L(\nu,\vartheta) = \floor{\frac{\LL(\nu,\vartheta)+1}{1-\rho_E}}$ to detect if
$(\nu,\vartheta)$ are good estimations, and eventually return the solution
$(\v,d)$ of the PLS. This is exactly what Algorithm~\ref{algo:ETLinearRS} does
and Proposition~\ref{prop:LinearDetectRS} shows its correction.

\begin{algorithm}[t]
  \SetKwInOut{Input}{Input}\SetKwInOut{Output}{Output}%
  \Input{ a stream of vectors $Y=(\y_j)$ for $j= 1,2, \ldots$ which is
    extensible on demand, where
    $\y_j=\frac{\v(\alpha_j)}{d(\alpha_j)}+\e_j$ $N>\deg(\v)$, $D>\deg(d)$, $\deg(A)$, $\deg(\b)$,\\
    $0 \le \rho_E < 1/2$ s.t. $|E(L)| \le \rho_E L$ for all $L$. }
  \Output{$(\v,d)$ the solution of \eqref{eq:isitPLS} }
  \BlankLine
  
  $L^{num} \leftarrow \LL(1,1)+1$\;
  \While{\texttt{true}}
  {
    $L \leftarrow \floor{\frac{L^{num}}{1-\rho_E}}$;
    $\tau \leftarrow \floor{\rho_E L}$; Require new $\y_j$\;
    \ForEach{$\nu, \vartheta$ with $\LL(\nu, \vartheta)+1==L^{num}$}{
      \If{\texttt{Check}$(L,\nu,\vartheta)$}{
        \KwRet{\texttt{FindSolution}$((\y_j), L,\nu, \vartheta)$}\;}
    }
    $L^{num}\leftarrow L^{num}+1$\;
  }
  
  \caption{Early Termination for PLSwE for linear error bound $\rho_E$.}\label{algo:ETLinearRS}
\end{algorithm}

We now prove that Algorithm~\ref{algo:ETLinearRS}
stops with a certain number of evaluation points.

\begin{proposition} \label{prop:LstopLinearIRS}
\
  \begin{enumerate}
  \item Algorithm~\ref{algo:ETLinearRS} terminates.
  \item Algorithm~\ref{algo:ETLinearRS} stops when it reaches $L^{s}$ evaluation
    points where
    \begin{equation}
      \label{eq:LstopRSLinear}
      L^{s} = \floor{\frac{\LL(\deg(\v),\deg(d)) + |E(L^{s})| + 2}{1-\rho_E}}
    \end{equation}

  \item \label{item:compKalto} We can bound $L^{s} \le \floor{\frac{\LL(\deg(\v),\deg(d)) + 2}{1-2\rho_E}}$.
  \item \label{item:adaptiveCount}If for some reason fewer errors are made, \ie$|E(L)| \le \rho_E' L$
    with $\rho_E' < \rho_E$, then
    $L^{s} \le \floor{\frac{\LL(\deg(\v),\deg(d)) + 2}{1-\rho_E'-\rho_E}}$.
  \end{enumerate}
\end{proposition}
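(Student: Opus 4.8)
The plan is to prove part (2) first --- the implicit equation for the stopping value $L^{s}$ --- and then to read off parts (1), (3) and (4) as corollaries. I will write $C := \LL(\deg(\v),\deg(d)) + 2$, and for a given value of the outer counter $L^{num}$ I set $L := \floor{L^{num}/(1-\rho_E)}$ and $\tau := \floor{\rho_E L}$, the quantities computed at that iteration of Algorithm~\ref{algo:ETLinearRS}. My first step is to reduce the stopping test to a condition on $\delta_{\nu,\vartheta}$: since every pair $(\nu,\vartheta)$ examined at iteration $L^{num}$ obeys $\LL(\nu,\vartheta) = L^{num}-1$, we have $L = \floor{(\LL(\nu,\vartheta)+1)/(1-\rho_E)}$ and $\tau=\floor{\rho_E L}$, so Proposition~\ref{prop:LinearDetectRS} applies and gives $\SS_{Y,\nu,\vartheta} = \langle x^i\Lambda\v, x^i\Lambda d\rangle_{0\le i < \delta_{\nu,\vartheta}}$ deterministically; hence $\texttt{Check}(L,\nu,\vartheta)$ returns \texttt{true} exactly when $\delta_{\nu,\vartheta} > 0$.

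For part (2) I would follow the template of Proposition~\ref{prop:stopcritAlgo1}. The key algebraic ingredient is the shift identity $\LL(\deg(\v)+k, \deg(d)+k) = \LL(\deg(\v),\deg(d)) + k$, which is immediate from \eqref{eq:LLnutheta} since adding $k$ to both arguments adds $k$ to each inner maximum. Together with the monotonicity of $\LL$ in each variable, this shows that the least value of $\LL(\nu,\vartheta)$ compatible with $\delta_{\nu,\vartheta} > 0$ (equivalently $\nu \ge \deg(\v)+|E(L)|+1$ and $\vartheta \ge \deg(d)+|E(L)|+1$) is $\LL(\deg(\v),\deg(d)) + |E(L)| + 1$, and that it is attained. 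Thus iteration $L^{num}$ succeeds iff $L^{num} \ge g(L^{num})$, where $g(L^{num}) := \LL(\deg(\v),\deg(d)) + |E(\floor{L^{num}/(1-\rho_E)})| + 2$. Because $|E(\cdot)|$ and $L^{num}\mapsto L$ are non-decreasing, $g$ is non-decreasing, so I can argue that the smallest index $L^{num,s}$ with $L^{num,s}\ge g(L^{num,s})$ in fact satisfies $L^{num,s} = g(L^{num,s})$ (minimality gives $g(L^{num,s}-1)\ge L^{num,s}$, monotonicity gives $g(L^{num,s}-1)\le g(L^{num,s})\le L^{num,s}$, squeezing to equality). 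Setting $L^{s} := \floor{L^{num,s}/(1-\rho_E)}$ and unfolding $g$ reproduces \eqref{eq:LstopRSLinear}.

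The remaining parts are then short. For (1), the same $g$ gives termination: by Assumption~\ref{ass:nberror} and $L \le L^{num}/(1-\rho_E)$ I get $g(L^{num}) \le C + \tfrac{\rho_E}{1-\rho_E}L^{num}$, and $\rho_E < 1/2$ makes the slope $\tfrac{\rho_E}{1-\rho_E}$ strictly less than $1$, so $g(L^{num}) < L^{num}$ for all large $L^{num}$ and a stopping index exists. For (3) I would drop the outer floor in \eqref{eq:LstopRSLinear} and substitute $|E(L^{s})| \le \rho_E L^{s}$, obtaining $L^{s}(1-\rho_E) \le C + \rho_E L^{s}$, hence $L^{s}(1-2\rho_E) \le C$ and, since $L^{s}\in\Z$, $L^{s} \le \floor{C/(1-2\rho_E)}$ (the denominator being positive as $\rho_E<1/2$). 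Part (4) is the same computation with $|E(L^{s})| \le \rho_E' L^{s}$, yielding $L^{s}(1-\rho_E-\rho_E') \le C$ and $L^{s}\le\floor{C/(1-\rho_E-\rho_E')}$; here I would also remark that the algorithm still counts with the declared $\rho_E$, so part (2) continues to hold because $\rho_E'<\rho_E$ keeps Assumption~\ref{ass:nberror} valid.

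The step I expect to be the real obstacle is the fixed-point argument in part (2): the error count $|E(L)|$ depends on the number of evaluations, which is exactly what is being solved for, so $L^{s}$ is only defined implicitly and I must turn $L^{num,s}\ge g(L^{num,s})$ into an equality using monotonicity of $g$ and integrality of the counter. A related subtlety is exhibiting, at that index, a pair with $\LL(\nu,\vartheta)$ equal to \emph{exactly} $L^{num,s}-1$ and $\delta_{\nu,\vartheta}>0$; the shift identity supplies it via $(\nu,\vartheta)=(\deg(\v)+|E(L^{s})|+1,\ \deg(d)+|E(L^{s})|+1)$, for which $\delta_{\nu,\vartheta}=1$ and $\LL(\nu,\vartheta)=L^{num,s}-1$.
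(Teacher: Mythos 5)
Your proposal is correct and follows essentially the same route as the paper: both reduce the stopping test to $\delta_{\nu,\vartheta}>0$ via Proposition~\ref{prop:LinearDetectRS}, exploit the shift identity $\LL(\deg(\v)+k,\deg(d)+k)=\LL(\deg(\v),\deg(d))+k$ to identify the first successful pair $(\nu,\vartheta)=(\deg(\v)+|E|+1,\deg(d)+|E|+1)$, use the fact that $\rho_E/(1-\rho_E)<1$ for termination, and obtain (3) and (4) by substituting the linear error bound into \eqref{eq:LstopRSLinear}. The only difference is organizational --- you establish the stopping value (2) first via a monotone fixed-point/squeeze argument on $g$ and deduce (1), whereas the paper proves termination first and then gets (2) by contradiction --- but the underlying ideas coincide.
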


The inequality given in Item~\ref{item:compKalto} relates the performance of our
early termination algorithm to the literature. Indeed, the right-hand bound
can be derived from~\cite[Algorithm 2.2]{kaltofen_early_2017} with $\rho_R = 0$ (no
rank drop) and $q_R = q_E = +\infty$ (for simplicity).

Note that $\rho_E$ and $\rho_E'$ don't play the same role: $\rho_E$ must be
known in advance (it is an input of the algorithm) and be related to a linear
error bound that is always true. If Assumption~\ref{ass:nberror} fails then the
correctness of Algorithm~\ref{algo:ETLinearRS} may be lost. On the other hand,
$\rho_E'$ is used to demonstrate that our early termination technique is
sensitive to the real number of errors (in addition to real degrees of $\v,d$),
\ie that it can stop earlier if fewer errors than expected are made.

\begin{proof} We keep the notations of the proof of
  Proposition~\ref{prop:LinearDetectRS}.
  
  (1) We need to prove that Algorithm~\ref{algo:ETLinearRS} stops. Consider
  $f(x) := x - \frac{\LL(\deg(\v),\deg(d)) + \rho_E x + 2}{1-\rho_E}$, which is
  strictly increasing because $1 > \rho_E/(1-\rho_E)$. This last inequality is true since
  $0 \le \rho_E < 1/2$.

  Let $L \in \N$ such that $f(L) \ge 0$. Set
  $\nu' = \deg(\v) + |E(L)| + 1, \vartheta' = \deg(d) + |E(L)| + 1$ and
  $L' = L^\ast(\nu',\vartheta')$. We have
  \begin{eqnarray*}
    L' & = &\floor{\frac{\LL(\deg(\v),\deg(d)) + |E(L)| + 2}{1-\rho_E}} \\
    & \le & \frac{\LL(\deg(\v),\deg(d)) + \rho_E L + 2}{1-\rho_E} \le L    
  \end{eqnarray*}
  where the last inequality comes from $f(L) \ge 0$. Hence,
  $\nu' \ge \deg(\v) + |E(L')| + 1$, $\vartheta'\geq \deg(d)+|E(L')|+1$ and so
  $\delta_{\nu',\vartheta'} > 0$. So Algorithm~\ref{algo:ETLinearRS} would stop
  with $\leq L'$ evaluations.
 
  (2) Let $L^{s}$ be the number of evaluations when the algorithm stops. We now
  prove Equation~(\ref{eq:LstopRSLinear}). There exists $\nu^s, \vartheta^s$
  such that $L^{s} = L^\ast(\nu^s,\vartheta^s)$, and we must have
  $\delta_{\nu^s,\vartheta^s} > 0$, \ie$\nu > \deg(\v) + |E(L^{s})|$ and
  $\vartheta > \deg(d) + |E(L^{s})|$. Define
  $\nu' = \deg(\v) + |E(L^{s})| + 1, \vartheta' = \deg(d) + |E(L^{s})| + 1$. We
  now prove that $L^{s} = L^\ast(\nu',\vartheta')$ by contradiction, which
  implies Equation~(\ref{eq:LstopRSLinear}). So assume that
  $L^{s} > L' := L^\ast(\nu',\vartheta')$ (note that the inequality $\ge$ is
  always true since $(\nu, \vartheta) \ge (\nu',\vartheta')$ and
  $L^\ast(\nu,\vartheta)$ is increasing). But
  $\nu' = \deg(\v) + |E(L^{s})| + 1 > \deg(\v) + |E(L')|$ and 
  $\vartheta'= \deg(d) + |E(L^{s})| + 1 > \deg(d) + |E(L')|$ so that $\delta_{\nu',\vartheta'} > 0$ and
  Algorithm~\ref{algo:ETLinearRS} would have stopped with $L' < L^{s}$
  evaluations which is a contradiction.

  (3) Now let $\bar{L}' = \frac{\LL(\deg(\v),\deg(d))+2}{1-2\rho_E}$ and
  $L' = \floor{\bar{L}'}$. $\bar{L}'$ is defined so that $0 = f(\bar{L}')$. We
  now prove that $f(L^{s}) \le 0$, which implies $L^{s} \le \bar{L}'$ since $f$
  is strictly increasing, thus $L^{s} \le \floor{\bar{L}'} = L'$ since
  $L^{s} \in \N$. The claim comes from
  $L^{s} \le \frac{\LL(\deg(\v),\deg(d)) + |E(L^{s})| + 2}{1-\rho_E} \le
  \frac{\LL(\deg(\v),\deg(d)) + \rho_E L^{s} + 2}{1-\rho_E}$.

  (4) If one execution of PLSwE satisfies $|E(L)| \le \rho_E' L$ for
  $\rho_E' < \rho_E$, we can prove that
  $L^{s} \le \floor{\frac{\LL(\deg(\v),\deg(d))+2}{1-\rho_E' - \rho_E}}$ by
  adapting the previous proof with
  $\overline{f}(x) := x - \frac{\LL(\deg(\v),\deg(d)) + \rho_E' x+ 2}{1-\rho_E}$.
\end{proof}

\subsubsection{Bounding $L$ for random errors} As before, we can lower the number of evaluation points considering randomly distributed errors. Theorem~\ref{thm:earlyTerm1} can be
adapted to the context of a linear error bound.

\begin{proposition}  \label{prop:LinearDetectIRS}
  Under Assumption~\ref{ass:nberror} and assumptions of
  Theorem~\ref{thm:earlyTerm1}, and using
  $L = \floor{\frac{\LL(\nu,\vartheta)+1}{1-\rho_E/n}}$ evaluation points and error
  bound $\tau = \floor{\rho_E L}$ for any $\nu, \vartheta \ge 1$, we have \linebreak
  $
  \SS_{Y,\nu,\vartheta}=\langle x^i\Lambda \v, x^i\Lambda d\rangle_{0\leq i
    < \delta_{\nu, \vartheta}}
  $
  with probability at least $1-\frac{\vartheta}{q}$.
\end{proposition}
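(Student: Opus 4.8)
The plan is to reduce Proposition~\ref{prop:LinearDetectIRS} to Theorem~\ref{thm:earlyTerm1} in exactly the same way that Proposition~\ref{prop:LinearDetectRS} reduces to Proposition~\ref{prop:stopcritkalto}. Recall that Theorem~\ref{thm:earlyTerm1} gives the desired conclusion $\SS_{Y,\nu,\vartheta}=\langle x^i\Lambda \v, x^i\Lambda d\rangle_{0\leq i < \delta_{\nu,\vartheta}}$ with failure probability at most $\vartheta/q$, provided $L\geq \LL(\nu,\vartheta)+\lceil\tau/n\rceil$ and $|E|\leq\tau$. So the whole task is to verify that the linear-error choice $L=\floor{\frac{\LL(\nu,\vartheta)+1}{1-\rho_E/n}}$ together with $\tau=\floor{\rho_E L}$ satisfies these two numerical hypotheses; the probability statement then transfers verbatim.

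First I would introduce the notations $\bar L^\ast(\nu,\vartheta)=\frac{\LL(\nu,\vartheta)+1}{1-\rho_E/n}$ and $L^\ast(\nu,\vartheta)=\floor{\bar L^\ast(\nu,\vartheta)}$, mirroring the proof of Proposition~\ref{prop:LinearDetectRS}. The error-count hypothesis $\tau\geq|E(L^\ast)|$ is immediate: Assumption~\ref{ass:nberror} gives $|E(L^\ast)|\leq\rho_E L^\ast$, and since $|E(L^\ast)|\in\N$ we may floor to obtain $|E(L^\ast)|\leq\floor{\rho_E L^\ast}=\tau$. The substantive step is the evaluation-count inequality $L^\ast(\nu,\vartheta)\geq\LL(\nu,\vartheta)+\ceil{\tau/n}$. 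Here I would bound
\begin{eqnarray*}
\LL(\nu,\vartheta)+\ceil{\tfrac{\tau}{n}}
 & \leq & \LL(\nu,\vartheta)+\ceil{\tfrac{\rho_E L^\ast}{n}} \\
 & \leq & \LL(\nu,\vartheta)+\tfrac{\rho_E}{n}\bar L^\ast(\nu,\vartheta)+1 \\
 & = & \bar L^\ast(\nu,\vartheta),
\end{eqnarray*}
where the last line uses $\bar L^\ast(1-\rho_E/n)=\LL(\nu,\vartheta)+1$, and then conclude $\LL(\nu,\vartheta)+\ceil{\tau/n}\leq\floor{\bar L^\ast(\nu,\vartheta)}=L^\ast(\nu,\vartheta)$ since the left-hand side is an integer.

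The one place demanding care is the ceiling in $\ceil{\tau/n}$, which is the only structural difference from Proposition~\ref{prop:LinearDetectRS}: there the error term was $\rho_E L^\ast$ with no division by $n$, whereas here I must absorb a $+1$ coming from $\ceil{\tfrac{\rho_E L^\ast}{n}}\leq\tfrac{\rho_E L^\ast}{n}+1$. This is precisely why the denominator of $\bar L^\ast$ is $1-\rho_E/n$ rather than $1-\rho_E$, and why the numerator carries the extra $+1$: the defining identity $\bar L^\ast(1-\rho_E/n)=\LL+1$ is exactly calibrated so that the stray $+1$ from the ceiling is exactly cancelled. I expect this bookkeeping around the ceiling and the floor to be the main (and only) obstacle; once the two hypotheses $\tau\geq|E(L^\ast)|$ and $L^\ast\geq\LL(\nu,\vartheta)+\ceil{\tau/n}$ are established, Theorem~\ref{thm:earlyTerm1} applies directly and yields the stated conclusion with probability at least $1-\vartheta/q$, completing the proof.
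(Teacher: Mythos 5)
Your proposal is correct and follows exactly the route the paper intends: the paper's own proof is a one-line remark that the argument of Proposition~\ref{prop:LinearDetectRS} carries over with Theorem~\ref{thm:earlyTerm1} in place of Proposition~\ref{prop:stopcritkalto}, and you carry out precisely that reduction. Your verification of the two hypotheses is sound, and your observation that the $+1$ in the numerator of $\bar L^\ast$ is what absorbs the ceiling in $\ceil{\tau/n}$ is the one detail the paper leaves implicit.
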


The proof is similar to the one of Proposition~\ref{prop:LinearDetectRS}, except
that it is based on Theorem~\ref{thm:earlyTerm1}.

With the help of the latter proposition, we can introduce
Algorithm~\ref{algo:ETLinearIRS}, which returns a correct solution of the PLSwE
problem with high probability.

\LinesNotNumbered
\begin{algorithm}[t]\label{algo:ourLinear}
Same as Algorithm~\ref{algo:ETLinearRS} except \\
Line 3: $L \leftarrow \floor{\frac{L^{num}}{1-\rho_E/n}}$;
    $\tau \leftarrow \floor{\rho_E L}$; Require new $\y_j$\;
  
  \caption{Early Termination for PLSwE for linear error bound $\rho_E$ for random errors.}\label{algo:ETLinearIRS}
\end{algorithm}
\LinesNumbered

\begin{proposition}

  Algorithm~\ref{algo:ETLinearIRS} stops with at most $L^{s}$ evaluations,
  where
  \begin{equation}
    \label{eq:LstopIRSXi}
    \small
    L^{s} = \floor{\frac{\LL(\deg(\v),\deg(d)) + |E(L^{s})| + 2}{1-\rho_E/n}} \le \floor{\frac{\LL(\deg(\v),\deg(d)) + 2}{1-(1+1/n)\rho_E}}
  \end{equation}
  Its output is correct with probability
  $\ge 1- \frac{2 (D+E(L^s)+1)(\deg(\Lambda\v, \Lambda d) + 1)}{q}$.

  If for some reason fewer errors are made, \ie$|E(L)| \le \rho_E' L$ with
  $\rho_E' < \rho_E$, then
  $L^{s} \le \floor{\frac{\LL(\deg(\v),\deg(d)) + 2}{1-\rho_E'-\rho_E/n}}$.
\end{proposition}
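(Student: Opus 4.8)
The plan is to prove the three claims by fusing the deterministic termination analysis of Proposition~\ref{prop:LstopLinearIRS} with the probabilistic bookkeeping of Proposition~\ref{prop:LstopIRSFixed}, now anchored on the detection result of Proposition~\ref{prop:LinearDetectIRS} (denominator $1-\rho_E/n$) instead of Proposition~\ref{prop:LinearDetectRS}. Following the two-level strategy of Proposition~\ref{prop:LstopIRSFixed}, I would first establish termination and the exact shape of $L^{s}$ under the \emph{favorable} hypothesis that every attempt $(\nu,\vartheta)$ lands in the good case $\SS_{Y,\nu,\vartheta}=\langle x^i\Lambda\v, x^i\Lambda d\rangle_{0\leq i<\delta_{\nu,\vartheta}}$, and only afterwards collect the failure probability arising from the unfavorable draws.

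For termination and the equality in \eqref{eq:LstopIRSXi}, I would transcribe the three steps of the proof of Proposition~\ref{prop:LstopLinearIRS}, replacing $1-\rho_E$ by $1-\rho_E/n$ everywhere. Concretely I set $f(x) := x - \frac{\LL(\deg(\v),\deg(d)) + \rho_E x + 2}{1-\rho_E/n}$, whose leading coefficient is $1-\frac{\rho_E}{1-\rho_E/n}$; this is positive precisely when $\rho_E(1+1/n)<1$, which holds since $\rho_E<1/2\le\frac{n}{n+1}$ for $n\ge 1$, so $f$ is strictly increasing. Choosing $\nu'=\deg(\v)+|E(L)|+1,\ \vartheta'=\deg(d)+|E(L)|+1$ at any $L$ with $f(L)\ge 0$ shows the algorithm halts (step~1), and the contradiction argument of step~2 forces $L^{s}=L^\ast(\nu',\vartheta')$, i.e. the equality in \eqref{eq:LstopIRSXi}. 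The right-hand bound follows by verifying $f(L^{s})\le 0$: the root $\bar L'$ of $f$ satisfies $\bar L'(1-\rho_E/n-\rho_E)=\LL(\deg(\v),\deg(d))+2$, hence $\bar L'=\frac{\LL(\deg(\v),\deg(d))+2}{1-(1+1/n)\rho_E}$, and monotonicity of $f$ gives $L^{s}\le\floor{\bar L'}$.

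For the correctness probability, I would reuse the counting of Proposition~\ref{prop:LstopIRSFixed}. Between the starting count and $L^{s}$ there are at most $\max(\deg(\v),\deg(d))+|E(L^{s})|+1=\deg(\Lambda\v,\Lambda d)+1$ distinct evaluation countings, and by Remark~\ref{rk:forEach2attempts} each is tested with at most two parameter pairs. By Proposition~\ref{prop:LinearDetectIRS} each attempt fails with probability at most $\vartheta/q$, and the same estimate as in Proposition~\ref{prop:LstopIRSFixed} (splitting on $N-1\le\deg(\b)$ versus $N-1\ge\deg(\b)$) bounds every $\vartheta$ that can occur, the largest arising at $L^{s}$, by $D+E(L^{s})+1$. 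A union bound over all attempts then yields the claimed failure probability $\frac{2(D+E(L^{s})+1)(\deg(\Lambda\v,\Lambda d)+1)}{q}$.

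Finally, the adaptive bound under $|E(L)|\le\rho_E'L$ with $\rho_E'<\rho_E$ is obtained by rerunning the termination argument with $\overline f(x):=x-\frac{\LL(\deg(\v),\deg(d))+\rho_E'x+2}{1-\rho_E/n}$, whose root is $\frac{\LL(\deg(\v),\deg(d))+2}{1-\rho_E'-\rho_E/n}$; note $\rho_E$ still governs the step size through the denominator while $\rho_E'$ enters only through the actual error count in the numerator. The main delicacy I anticipate is exactly the two-level coordination inherited from Proposition~\ref{prop:LstopIRSFixed}: the clean $L^{s}$ formula and the stopping claim are proved assuming favorable draws, so one must keep the deterministic termination statement cleanly separated from the probabilistic correctness statement and check that the unfavorable events enumerated there (a premature stop when $\SS_{Y,\nu,\vartheta}\neq\{(\bz,0)\}$ yet $\delta_{\nu,\vartheta}\le 0$, or $\delta_{\nu,\vartheta}>0$ with $\SS_{Y,\nu,\vartheta}\neq\langle x^i\Lambda\v, x^i\Lambda d\rangle$) are precisely those absorbed into the $\vartheta/q$ per-attempt bound.
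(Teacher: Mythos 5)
Your proposal is correct and follows exactly the route the paper intends: the paper's own proof is just a two-sentence pointer saying the evaluation counts are obtained as in Proposition~\ref{prop:LstopLinearIRS} (with the denominator $1-\rho_E/n$ coming from Theorem~\ref{thm:earlyTerm1} via Proposition~\ref{prop:LinearDetectIRS}) and the correctness probability as in Proposition~\ref{prop:LstopIRSFixed}. Your detailed transcription, including the monotonicity check $\rho_E(1+1/n)<1$, the computation of the root $\bar L'$, and the $2(D+E(L^s)+1)(\deg(\Lambda\v,\Lambda d)+1)/q$ union bound, fills in precisely those details.
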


\begin{proof}
  The proof concerning all the numbers of evaluation points is similar to the one of
  Proposition~\ref{prop:LstopLinearIRS}, but it is based on the number of evaluations of
  Theorem~\ref{thm:earlyTerm1}. The statement of correctness can be proved similarly
  to Proposition~\ref{prop:LstopIRSFixed}.
\end{proof}

\bibliographystyle{alpha}
\bibliography{main.bib}

\end{document}